\def\BibTeX{{\rm B\kern-.05em{\sc i\kern-.025em b}\kern-.08em
    T\kern-.1667em\lower.7ex\hbox{E}\kern-.125emX}}
\newcommand\req[1]{(\ref{eq:#1})}
\newcommand\refig[1]{Fig.~\ref{fig:#1}}
\newcommand\reapp[1]{Appendix~\ref{app:#1}}
\newcommand\reas[1]{Assumption~\ref{as:#1}}
\newcommand\relem[1]{Lemma~\ref{lem:#1}}
\newcommand\reth[1]{Theorem~\ref{th:#1}}
\newcommand\ressec[1]{\ref{ssec:#1}}
\newtheorem{theorem}{Theorem}
\newtheorem{lemma}{Lemma}
\newtheorem{fact}{Fact}
\newtheorem{assumption}{Assumption}
\newtheorem{remark}{Remark}
\newcommand\bR{\mathbb{R}}
\newcommand\cA{\mathcal{A}}
\newcommand\cL{\mathcal{L}}
\newcommand\cS{\mathcal{S}}
\newcommand\bone{\textbf{1}}
\newcommand\veps{\varepsilon}
\def\nn{\nonumber}
\begin{document}
\title{
Passivity-based Analysis and Design for Population Dynamics with Conformity Biases
}
\author{
Shunya Yamashita, \IEEEmembership{Student Member, IEEE}, 
Kodai Irifune, 
Takeshi Hatanaka, \IEEEmembership{Senior Member, IEEE}, \\
Yasuaki Wasa, \IEEEmembership{Member, IEEE}, 
Kenji Hirata, \IEEEmembership{Member, IEEE} 
and 
Kenko Uchida, \IEEEmembership{Member, IEEE}
\thanks{
This work was supported in part by JST-Mirai Program under Grant 18077648, and in part by JSPS KAKENHI under Grant 21J13956.
}
\thanks{
S.~Yamashita, K.~Irifune and T.~Hatanaka are with the School of Engineering, 
Tokyo Institute of Technology, 
Tokyo 152-8550, Japan 
(e-mail: yamashita.s.ag@hfg.sc.e.titech.ac.jp; irifune@hfg.sc.e.titech.ac.jp; hatanaka@sc.e.titech.ac.jp). 
}
\thanks{
Y.~Wasa and K.~Uchida are with the School of Advanced Science and Engineering, 
Waseda University, 
Tokyo 169-8555, Japan 
(e-mail: wasa@aoni.waseda.jp; kuchida@waseda.jp).
}
\thanks{
K.~Hirata is with the Faculty of Engineering, University of Toyama, 
Toyama 930-8555, Japan 
(e-mail: hirata@eng.u-toyama.ac.jp).
}
}

\maketitle

\begin{abstract}
	This paper addresses mechanisms for boundedly rational decision makers in discrete choice problem. 
	First, we introduce two mathematical models of population dynamics with conformity biases. 
	We next analyze the models in terms of $\delta$-passivity, and show that the conformity biases work to break passivity of decision makers. 
	Based on the passivity perspective, we propose mechanisms so as to induce decision makers to a desired population state. 
	Furthermore, we analyze a convergence property of designed mechanisms, and present parameter conditions to guarantee stable inducements. 
\end{abstract}

\begin{IEEEkeywords}
Population dynamics, 
Bounded rationality, 
Conformity bias, 
Behavior modification, 
Passivity
\end{IEEEkeywords}

\section{Introduction}
\label{sec:intro}



Human behavior is a critically important factor in design and analysis of large-scale social systems like transportation network~\cite{CSet2013} and energy management systems~\cite{DHP2002,BMet2010,LCD2015}. 
As seen in these publications, a typical approach to address such systems involving humans is to model social decisions/dynamics assuming human rationality, against the background of expected utility theory~\cite{S1991}. 
Meanwhile, behavioral economics has pointed out that human rationality may be bounded due to information and/or cognitive constraints, 
as typified by prospect theory~\cite{KT1979} or dual-process theory~\cite{BC2014}. 
It is highly uncertain that any system designed under the  assumption of human rationality will work when their rationality is bounded. 
Motivated by the issue, boundedly rational human/social models have begun to be investigated in the field of systems and control\cite{GAT2019,MSB2020}. 
Among various types of bounded rationality, in this paper, we focus on so-called conformity bias~\cite{KN2002}, tendency to follow the majority, that is observed in the scenes of parking location choice~\cite{FM2007} and evacuation decision~\cite{UH2012}.


One of the decision making issues is discrete choice problem which decision makers choose a strategy from finite number of options~\cite{T2003}. 
Indeed, some significant issues in society are categorized as discrete choice: route selection in transportation networks \cite{SM2000}, 
choice of energies \cite{S2017}, 
and water distribution \cite{RQ2010}, for example. 
Meanwhile, when we deal with systems including large scale of population, 
it should be required to consider decision models as a population \cite{MFLB2018}. 
Discrete choice behavior by large populations is well-addressed in evolutionary game theory, in which various types of population dynamics have been presented, e.g., logit dynamics~\cite{HH2005,HS2007}, Smith dynamics~\cite{S1984} and pairwise
comparison dynamics~\cite{S2010}. 
Meanwhile, 
some recent publications have pointed out relations between population dynamics and passivity \cite{FS2013,PSM2018,PPD2017}. 
The researches in \cite{FS2013,PSM2018} analyze passivity for generally formulated population dynamics, 
and \cite{PPD2017} particularly focuses on the decision model in water distribution system.
However, the models dealt with in the above literature do not explicitly consider influences of biases. 
Although several publications \cite{CL2018,ZYDZ2021} deal with bounded rationality in population dynamics, they address the models without focusing on passivity. 


In social systems, it is sometimes preferred to induce humans to a desired social state. 
From the viewpoint of inducement for populations, 
behavior modification mechanism might be a key concept to achieve a desired behavior. 
This includes incentive \cite{LM2002} which is an economic approach, or nudge \cite{TS2008} which is an informational one. 
These kinds of mechanisms have recently attracted the attention of control community \cite{RDO2014,SCM2020}, 
and more specifically, the ones for population dynamics have also been studied \cite{WEet2016,CL2016}. 
Whereas, the publication which explicitly consider bounded rationality is rare. 
The authors in \cite{CL2018} propose a nudging mechanism for biased population dynamics, and analyze stability by using singular perturbation.


In this paper, we address mechanisms for discrete choice problem under bounded rationality. 
First, we introduce a well-used population dynamics model, called the logit dynamics, and its $\delta$-passivity \cite{PSM2018}. 
Next, we extend the decision making model to the one with conformity biases. 
We show two types of the bias models respectively addressed in~\cite{BD2001} and~\cite{CL2018}. 
We then analyze the biased logit dynamics in terms of $\delta$-passivity, and reveal the impacts of conformity biases. 
Inspired by passivity-based control methods, we propose mechanisms so as to achieve desired social behavior for two types of biased models, respectively. 
Furthermore, we analyze their convergence, and show parameter conditions in order to stabilize the proposed mechanisms. 

We summarize the contributions of this paper as follows: 
\begin{itemize}
	\item the impact by conformity biases to population dynamics is shown in terms of passivity, and 
	\item passivity-based mechanisms are presented, which stably achieve desired social state. 
\end{itemize}
Parts of the contents in this paper are similar to the previous work in~\cite{YHet2020}. 
Meanwhile, variation of bias models and exact analysis of mechanisms are the incremental contributions added anew in this paper.

\section{Model Description}

\subsection{Preliminaries: $\delta$-Passivity}

This subsection introduces $\delta$-passivity \cite{PSM2018} which is a similar concept to passive systems \cite{HCFS2015}. 
Suppose a system $\Sigma$ represented by the state space model 
\begin{align*}
	\dot{x} = f(x, u), 
\end{align*}
where $x \in \bR^n$ is the state, $u \in \bR^n$ is the input, and $f : \bR^n \times \bR^n \to \bR^n$ is a function. 
Then, $\Sigma$ is called $\delta$-passive from $\dot{u}$ to $\dot{x}$ if there exists a positive semi-definite function $S : \bR^n \times \bR^n \to [0, \infty)$ and a scalar $\rho \geq 0$ such that
\begin{align*}
S(u(t), x(t)) - S(u(0), x(0)) \leq \int_{0}^{t} \dot{x}^\top (\hat{t}) \dot{u} (\hat{t}) - \rho ||\dot{u} (\hat{t})||^2 d \hat{t}
\end{align*}
for all input $u$, all initial state $x(0)$ and all $t \geq 0$. 
The positive semi-definite function $S$ is particularly called a storage function. 
In addition, $\Sigma$ is called $\delta$-input-strictly-passive if the above inequality is satisfied with some positive scalar $\rho > 0$. 
As widely known, if $S$ is differentiable, we can replace the above inequality with $\dot{S} (u(t), x (t)) \leq \dot{x}^\top (t) \dot{u}(t) - \rho ||\dot{u} (t)||^2$. 

In the same way as passivity shortage \cite{QS2014}, we define $\delta$-output-passivity-shortage. 
The system $\Sigma$ is called $\delta$-output-passivity-short from $\dot{u}$ to $\dot{x}$ if there exists a positive semi-definite function $S : \bR^n \times \bR^n \to  [0, \infty)$, and a scalar $\gamma \geq 0$ such that
\begin{align*}
S(u(t), x(t)) - S(u(0), x(0)) \leq \int_{0}^{t} \dot{x}^\top (\hat{t}) \dot{u} (\hat{t}) + \gamma ||\dot{x} (\hat{t})||^2 d \hat{t}
\end{align*}
for all input $u$, all initial state $x(0)$ and all $t \geq 0$. 
We call the value $\gamma$ as an impact coefficient. 
If $S$ is differentiable, the above inequality is equivalent to 
$\dot{S} (u(t), x (t)) \leq \dot{x}^\top (t) \dot{u}(t) + \gamma ||\dot{x} (t)||^2$.

Feedback system composed of strictly passive component and passivity-short one is related to the Lyapunov stability~\cite{HCFS2015,SJK2012}. 
Consider a $\delta$-input-strictly-passive system whose input is $u_1$ and state is $x_1$, which satisfies $\dot{S}_1 \leq \dot{x}_1^\top \dot{u}_1 - \rho \| \dot{u}_1 \|^2$ for a differentiable storage function $S_1$ and $\rho > 0$. 
We also suppose a $\delta$-output-passivity-short system whose input is $u_2$ and state is $x_2$, which satisfies $\dot{S}_2 \leq \dot{x}_2^\top \dot{u}_2 + \gamma \| \dot{x}_2 \|^2$ for a differentiable storage function $S_2$ and $\gamma > 0$. 
Then, the feedback interconnection under $u_1 = x_2$ and $u_2 = -x_1$ provides
\begin{align*}
	\dot{S}_1 + \dot{S}_2 \leq (\gamma - \rho) \| \dot{x}_1 \|^2. 
\end{align*}
If $\gamma \leq \rho$ holds and $S_1 + S_2$ is radially unbounded, the above inequality suggests stability of the feedback system in the sense of Lyapunov~\cite{HCFS2015}. 
In other words, $\delta$-input-strictly-passive systems can stabilize $\delta$-output-passivity-short ones by negative feedback.

\subsection{Dynamic Decision Making Model under Rationality}

This subsection introduces the logit dynamics~\cite{HS2007,PSM2018} which is a well-used dynamical discrete choice model.

Consider the situation that decision makers choose a strategy from $n$ available strategies. 
We denote the strategy set as $\cA := \{ 1, 2, \ldots, n \}$. 
Suppose that the population of the decision makers can be represented as a continuum value. 
We now define $\cS := \{ x \in [0,1]^n \mid \bone_n^\top x =1\}$ as the strategy choice distribution set, where $\bone_n$ describes $n$ dimensional vector whose elements are all~$1$. 
We denote the relative interior of the set $\cS$ as $\textrm{int}(\cS)$. 
The population state $\pi \in \cS$ implies the distribution of strategy choice. 
Specifically, the $k$-th element of $\pi$, denoted by $\pi_k \in [0, 1]$, implies a fraction of the decision makers selecting strategy $k \in \cA$. 

Let us define a cost vector $\tau \in \bR^n$ of which $k$-th element $\tau_k \in \bR$ corresponds to the cost for choosing strategy $k \in \cA$. 
Then, the logit dynamics~\cite{HS2007,PSM2018} is represented as 
\begin{align}
\dot{\pi}= \eta \left(Q(\tau)-\pi \right) , \ \pi(0) \in \textrm{int}(\cS), 
\label{eq:logit}
\end{align}
where $\eta$ is a positive constant which indicates the update rate. 
The function $Q: \bR^n \to \textrm{int}(\cS)$ corresponds to the steady state of \req{logit} and follows
\begin{align}
Q_k(\tau)=\frac{\exp(-\beta \tau_k)}{\sum\limits_{l \in \cA}\exp(-\beta \tau_l)}, 
\label{eq:logit_s}
\end{align}
where $Q_k : \bR^n \to (0, 1)$ implies $k$-th element of $Q$, and $\beta > 0$ is a constant. 
The system \req{logit} is known to guarantee $\pi(t) \in \textrm{int}(\cS)$ for any time $t \geq 0$. 
Remarking that $Q_k(\tau) > Q_i(\tau)$ for any pair $k \neq i$ satisfying $\tau_k < \tau_i$, 
the logit dynamics \req{logit} can be interpreted as an approximated model of best response, i.e., rational decision making \cite{HH2005}. 

The logit dynamics \req{logit} is also known to satisfy $\delta$-passivity as below. 
\begin{lemma}[\cite{PSM2018}]
\label{lem:dp0}
The logit dynamics \req{logit} is $\delta$-passive from $- \dot{\tau}$ to $\dot{\pi}$ for the storage function
\begin{align}
S(\tau,\pi) := \ & \eta \left( \pi^\top \tau+\frac{1}{\beta} \sum_{l \in \cA}\pi_l \log{\pi_l} \right)
\nonumber\\
& \ - \eta \min_{\omega \in \cS} \left( \omega^\top \tau+\frac{1}{\beta}\sum_{l \in \cA}\omega_l \log{\omega_l} \right), 
\label{eq:stfunc1}
\end{align}
i.e., $\dot{S} \leq - \dot{\tau}^\top \dot{\pi}$ holds. 
\end{lemma}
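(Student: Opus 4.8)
The plan is to read the bracketed quantity in \req{stfunc1} as
\[
V(\tau,\omega):=\omega^\top\tau+\frac1\beta\sum_{l\in\cA}\omega_l\log\omega_l,
\]
a strictly convex (entropy-regularized expected-cost) function of $\omega$ on $\cS$, and to exploit the fact that the logit map $Q(\tau)$ of \req{logit_s} is precisely its unique minimizer over $\cS$. This is the standard variational characterization of the Gibbs/logit distribution: the minimizer stays in $\textrm{int}(\cS)$ because $\omega_l\log\omega_l$ has infinite slope at the boundary, so the KKT conditions for $\min_{\omega\in\cS}V(\tau,\omega)$ under the single constraint $\bone_n^\top\omega=1$ read $\tau_k+\frac1\beta(\log\omega_k+1)=\lambda$ for all $k\in\cA$, whose solution is exactly \req{logit_s}. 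Consequently $S(\tau,\pi)=\eta\big(V(\tau,\pi)-V(\tau,Q(\tau))\big)$, and since $Q(\tau)$ is the minimizer (so $\nabla_\omega V(\tau,Q(\tau))=\lambda\bone_n$ is orthogonal to the simplex directions) this coincides with the Bregman divergence of the negative entropy, i.e. $S(\tau,\pi)=\frac{\eta}{\beta}\sum_{l\in\cA}\pi_l\log\big(\pi_l/Q_l(\tau)\big)\ge0$. In particular $S$ is a legitimate positive semi-definite storage function, vanishing exactly at $\pi=Q(\tau)$.

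Next I would differentiate $S$ along the solutions of \req{logit}. The term $\eta V(\tau,\pi)$ contributes $\eta\big(\tau+\frac1\beta(\log\pi+\bone_n)\big)^\top\dot\pi+\eta\,\pi^\top\dot\tau$, with $\log\pi$ taken elementwise. For the remaining $\min$ term, since the minimizer $Q(\tau)$ is unique and smooth in $\tau$, the envelope (Danskin) theorem gives $\frac{d}{dt}\min_{\omega\in\cS}V(\tau,\omega)=Q(\tau)^\top\dot\tau$. Adding the two and using $\dot\pi=\eta(Q(\tau)-\pi)$, so that $\eta\,\pi^\top\dot\tau-\eta\,Q(\tau)^\top\dot\tau=\eta\,(\pi-Q(\tau))^\top\dot\tau=-\dot\pi^\top\dot\tau$, one arrives at
\[
\dot S=-\dot\tau^\top\dot\pi+\eta\Big(\tau+\tfrac1\beta(\log\pi+\bone_n)\Big)^\top\dot\pi .
\]
Since $\pi(t)$ and $Q(\tau)$ both lie in $\cS$, the velocity satisfies $\bone_n^\top\dot\pi=0$, so the constant ($\bone_n$) contribution drops out and it remains only to show $\big(\tau+\frac1\beta\log\pi\big)^\top\dot\pi\le0$.

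To finish, I would substitute the explicit form \req{logit_s}, which gives $\tau_k=-\frac1\beta\log Q_k(\tau)+c(\tau)$ with $c(\tau)$ independent of $k$; the constant $c(\tau)$ again washes out against $\bone_n^\top\dot\pi=0$, leaving
\[
\Big(\tau+\tfrac1\beta\log\pi\Big)^\top\dot\pi=\frac{\eta}{\beta}\sum_{k\in\cA}\big(\log\pi_k-\log Q_k(\tau)\big)\big(Q_k(\tau)-\pi_k\big),
\]
and every summand is nonpositive by monotonicity of $\log$: the factors $\log\pi_k-\log Q_k(\tau)$ and $Q_k(\tau)-\pi_k$ always carry opposite signs. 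Hence $\dot S\le-\dot\tau^\top\dot\pi$, which is exactly $\delta$-passivity from $-\dot\tau$ to $\dot\pi$ with storage function $S$ and $\rho=0$. The step I expect to be the main obstacle is not any single computation but the handling of the nonsmooth $\min$ term: one must verify that $Q(\tau)$ really is the unique minimizer and that $\tau\mapsto\min_{\omega\in\cS}V(\tau,\omega)$ is $C^1$ so that the envelope theorem is legitimately applicable, and one must consistently use that all relevant velocities lie in the simplex tangent space $\{v:\bone_n^\top v=0\}$ in order to discard the $\bone_n$ and constant terms.
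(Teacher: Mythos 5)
Your proof is correct: the identification of $S$ with the scaled KL divergence $\frac{\eta}{\beta}\sum_{l}\pi_l\log(\pi_l/Q_l(\tau))$, the Danskin/log-sum-exp evaluation of the $\min$ term (which is in fact smooth, equal to $-\frac{1}{\beta}\log\sum_l\exp(-\beta\tau_l)$), the cancellation via $\dot{\pi}=\eta(Q(\tau)-\pi)$, and the sign argument on $(\log\pi_k-\log Q_k(\tau))(Q_k(\tau)-\pi_k)$ are all sound, using that $\pi(t)$ stays in $\textrm{int}(\cS)$ so $S$ is differentiable along trajectories. The paper does not prove this lemma itself (it is imported from \cite{PSM2018}), but your derivation is essentially the standard one, and your KKT computation of the minimizer coincides with the paper's own closed-form evaluation of the $\min$ term in \reapp{pr_Srad}.
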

Remark that the storage function $S(\tau,\pi)$ has a kind of radial unboundedness. 
See \reapp{pr_Srad} for the details.

\subsection{Bias Models}

The logit model \req{logit} corresponds to an approximation of rational strategy choice. 
Meanwhile, in behavioral economics field, it has been pointed out that human's decision process is sometimes biased relying on population state. 
This subsection introduces two models of conformity bias. 
In particular, we represent these bias models as the ones to affect the cost $\tau$. 
Hereafter, we refer to $\tau$ as a biased cost.
Alternatively, the actual cost is denoted by $T \in \bR^n$, which is basically supposed to be non-negative. 

First, we introduce the bias model which is a generalization of interactions model~\cite{BD2001}. 
In this model, the biased cost $\tau$ is given by
\begin{align}
\tau = T + B (\pi), 
\label{eq:bias1}
\end{align}
where $B : \cS \to \bR^n$ is the bias function, whose $k$-th element obeys $b_k(\pi_k)$. 
The function $b_k : [0, 1] \to \bR$ is assumed to satisfy the following: 
\begin{assumption}
\label{as:func_b}
The function $b_k \ (k \in \cA)$ obeys the following items: 
\begin{itemize}
\item[(i)] $b_k$ is continuous in $[0, 1]$ and continuously differentiable in $(0, 1)$. 
\item[(ii)] 
$b_k$ is decreasing in $\pi_k$, and uniformly bounded by $[b^L, b^H]$ with some $b^H \geq b^L$. 
\item[(iii)] 
The first derivative $\nabla b_k$ is uniformly bounded by $[- c^H, - c^L]$ with some $c^H \geq c^L > 0$. 
\end{itemize}
\end{assumption}

The item (ii) in \reas{func_b} implies that if $k \in \cA$ is a majority strategy, the decision makers get a lower impression of cost than the actual cost $T_k$. 
Hence, $B(\pi)$ has the tendency that the strategy chosen by many people intensifies its own popularity, which corresponds to conformity bias. 
The first derivative $\nabla b_k$ implies the dependency of $b_k$ on $\pi_k$, i.e., this corresponds to the strength of the bias. 
In particular, $c^H$, which is the maximum of $| \nabla b_k (\pi_k) |$, can be interpreted as the maximal bias-strength. 
In the sequel, we call the decision making model composed of \req{logit} and \req{bias1} as Model~1. 
The block diagram of Model~1 is illustrated in \refig{logit_b1}.

\begin{figure}[t]
\begin{center}
\includegraphics[width=84mm]{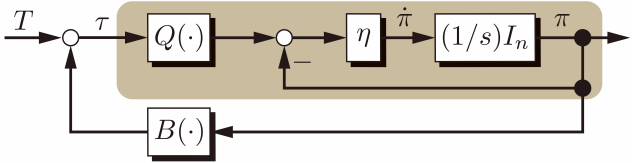}    
\caption{Block diagram of Model~1. 
The system colored by khaki is the logit model \req{logit}.
} 
\label{fig:logit_b1}
\end{center}
\end{figure}

In this paper, we address another model of conformity bias, presented in~\cite{CL2018, YHet2020}. 
For given $T$, in this model, the biased cost $\tau$ obeys 
\begin{align}
\tau = W(\pi) T, 
\label{eq:bias2}
\end{align}
where $W(\pi) = \textrm{diag} (w_1(\pi_1), w_2(\pi_2), \ldots, w_n(\pi_n))$ is the bias matrix. 
The $k$-th diagonal element $w_k: [0,1] \to \bR$ ($k \in \cA$) is the bias function for strategy $k$. 
Throughout this paper, we set the following assumption for the bias function $w_k$. 
\begin{assumption}
\label{as:func_w}
The function $w_k \ (k \in \cA)$ obeys the following items: 
\begin{itemize}
\item[(i)] $w_k$ is continuous in $[0, 1]$ and continuously differentiable in $(0, 1)$. 
\item[(ii)] 
$w_k$ is decreasing in $\pi_k$, and uniformly bounded by $[w^L, w^H]$ with some $w^H \geq w^L > 0$. 
\item[(iii)] 
The first derivative $\nabla w_k$ is uniformly bounded by $[- v^H, - v^L]$ with some $v^H \geq v^L > 0$. 
\end{itemize}
\end{assumption}

In the subsequent discussion, we call the model composed of \req{logit} and \req{bias2} as Model~2. 
Under \reas{func_w}, the biased cost for strategy $k$ decreases when $\pi_k$ gets large. 
In other words, the majority strategy tends to be well chosen. 
Since $| \nabla w_k |$ is the dependency of $w_k$ on $\pi_k$, this implies the strength of the bias and its maximum value $v^H$ represents the maximal bias-strength of Model~2. 
The block diagram of Model~2 is illustrated in \refig{logit_b2}.

\begin{figure}[t]
\begin{center}
\includegraphics[width=84mm]{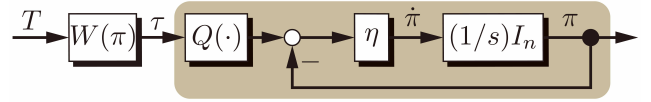}    
\caption{Block diagram of Model~2. 
The khaki part implies \req{logit}.
} 
\label{fig:logit_b2}
\end{center}
\end{figure}


\section{Passivity Analysis of Decision Making Models}

In this section, we analyze the biased decision making models, introduced in the last section, in terms of $\delta$-passivity.

First, we consider Model~1, which is composed of \req{logit} and \req{bias1}. 
From \req{bias1}, the signal $\dot{\tau}$ is given by 
\begin{align}
\dot{\tau} = \dot{T} + B'(\pi) \dot{\pi}, 
\label{eq:dtau1}
\end{align}
where $B'(\pi) := \textrm{diag} (\nabla b_1(\pi_1), \nabla b_2(\pi_2), \dots, \nabla b_n(\pi_n) )$ is negative definite because of \reas{func_b}. 
From \relem{dp0} and \req{dtau1}, we have the following lemma. 
\begin{lemma}
\label{lem:dps_bias1}
Under \reas{func_b}, the system \req{logit} with \req{bias1} is $\delta$-output-passivity-short from $- \dot{T}$ to $\dot{\pi}$.
\end{lemma}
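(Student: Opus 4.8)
The plan is to substitute the biased-cost law \req{bias1} into the $\delta$-passivity inequality of \relem{dp0} and observe that the only extra term it produces is a quadratic form in $\dot\pi$ bounded above by $c^H\|\dot\pi\|^2$, so it can be charged to the shortage term $\gamma\|\dot\pi\|^2$. Concretely, I would keep the storage function $S(\tau,\pi)$ of \req{stfunc1} but, along trajectories of Model~1, regard it as a function of the \emph{true} cost $T$ and the state $\pi$ through $\tilde S(T,\pi):=S\bigl(T+B(\pi),\pi\bigr)$. Since $S\ge 0$, the map $\tilde S$ is positive semi-definite, and because $\tau(t)=T(t)+B(\pi(t))$ holds identically along every trajectory of Model~1, $\tilde S$ has the same time derivative as $S$ along the dynamics.

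The core estimate is essentially one line. By \relem{dp0}, $\dot S\le-\dot\tau^\top\dot\pi$ for every cost signal $\tau(\cdot)$, in particular for $\tau=T+B(\pi)$; inserting \req{dtau1},
\begin{align*}
\dot{\tilde S}\ &\le\ -\bigl(\dot T+B'(\pi)\dot\pi\bigr)^\top\dot\pi\\
&=\ -\dot T^\top\dot\pi\ -\ \dot\pi^\top B'(\pi)\dot\pi.
\end{align*}
The matrix $-B'(\pi)=\textrm{diag}\bigl(-\nabla b_1(\pi_1),\dots,-\nabla b_n(\pi_n)\bigr)$ is diagonal with entries in $[c^L,c^H]$ by \reas{func_b}(iii), hence $-\dot\pi^\top B'(\pi)\dot\pi=\dot\pi^\top(-B'(\pi))\dot\pi\le c^H\|\dot\pi\|^2$, and therefore
\[
\dot{\tilde S}\ \le\ -\dot T^\top\dot\pi\ +\ c^H\|\dot\pi\|^2.
\]
This is the differential form of $\delta$-output-passivity-shortage from $-\dot T$ to $\dot\pi$; integrating over $[0,t]$ returns the integral inequality in the definition, with impact coefficient $\gamma=c^H$, which notably coincides with the maximal bias-strength of Model~1.

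I do not anticipate a genuine obstacle here; the points needing a sentence of care are bookkeeping rather than substance. First, the interconnected system's input is $-T$, not $-\tau$, which is why one must pass to $\tilde S$ and note that, since $B$ is a static function of the state, $\tilde S$ is a legitimate storage function of $(T,\pi)$. Second, $B'(\pi)$ must be meaningful along trajectories: this holds because the logit flow keeps $\pi(t)\in\textrm{int}(\cS)$ for all $t\ge 0$, so each $\pi_k(t)\in(0,1)$, where \reas{func_b}(i) makes $b_k$ continuously differentiable. Finally, if one prefers not to lean on differentiability of $S$ (because of the $\min$ term in \req{stfunc1}), the same manipulation can be carried out verbatim on the integral form of \relem{dp0}, which produces the integral definition directly; only the upper bound $c^H$ in \reas{func_b}(iii) is actually used, which is consistent with the conclusion being merely passivity \emph{shortage}.
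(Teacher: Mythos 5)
Your proposal is correct and follows essentially the same route as the paper: substitute $\tau = T + B(\pi)$ into the storage function of \relem{dp0}, insert \req{dtau1}, and bound $-\dot{\pi}^\top B'(\pi)\dot{\pi}$ by $c^H\|\dot{\pi}\|^2$ via \reas{func_b}(iii), yielding the shortage inequality with impact coefficient $c^H$. The additional remarks (viewing $S(T+B(\pi),\pi)$ as a storage function in $(T,\pi)$, and $\pi(t)\in\textrm{int}(\cS)$ ensuring differentiability of $b_k$) are careful but do not change the argument.
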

\begin{proof}
Consider the function $S(\tau, \pi)$ defined in \req{stfunc1} with substituting $\tau = T + B (\pi)$. 
From \relem{dp0}, we have
\begin{align}
\dot{S} &\leq - \dot{\tau}^\top \dot{\pi} \nn\\
&= - (\dot{T} + B'(\pi) \dot{\pi})^\top \dot{\pi}
\nn\\
& = - \dot{T}^\top \dot{\pi} - \dot{\pi}^\top B'(\pi) \dot{\pi} 
\nn\\
&\leq  - \dot{T}^\top \dot{\pi} + c^H \| \dot{\pi}  \|^2. 
\label{eq:dps_bias1}
\end{align}
This completes the proof. 
\end{proof}

Focusing on the $\delta$-passivity, the block diagram of Model~1 can be represented as \refig{logit_b1_dp}. 
\relem{dps_bias1} suggests that a positive feedback by $-B'(\pi)$ appears on the outside of the non-biased logit model, as shown in \refig{logit_b1_dp}. 
We can confirm from \req{dps_bias1} that the impact coefficient incrementally varies according to strength of the bias $c^H$. 
In other words, the conformity bias modeled in \req{bias1} violates passivity of decision making.

\begin{figure*}[t]
\begin{center}
\includegraphics[width=130mm]{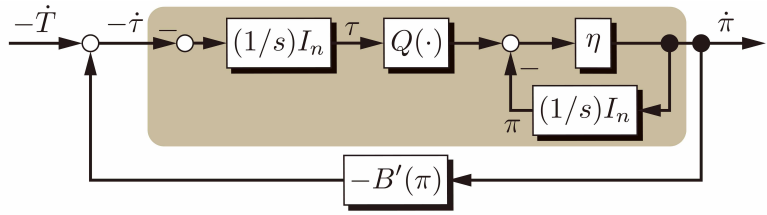}    
\caption{Block diagram of Model~1 expressed along $\delta$-passivity. 
The khaki part is $\delta$-passive (\relem{dp0}).
} 
\label{fig:logit_b1_dp}
\end{center}
\end{figure*}

\begin{figure*}[t]
\begin{center}
\includegraphics[width=130mm]{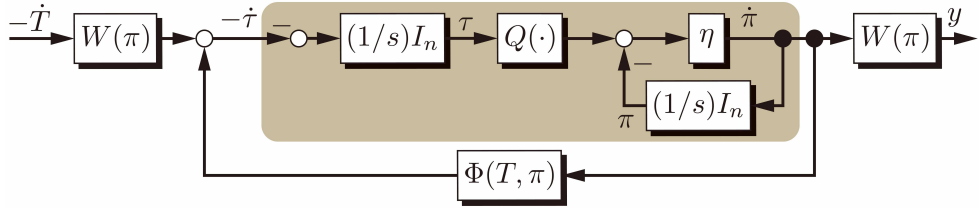}    
\caption{Block diagram of Model~2 expressed along $\delta$-passivity. 
} 
\label{fig:logit_b2_dp}
\end{center}
\end{figure*}

We next analyze Model~2. 
Here, we define the diagonal matrix $W'(\pi) := \textrm{diag} (\nabla w_1(\pi_1), \nabla w_2(\pi_2), \dots, \nabla w_n(\pi_n) )$ which is negative definite due to \reas{func_w}. 
Then, the bias model \req{bias2} yields 
\begin{align}
\dot{\tau} &= W(\pi) \dot{T} + \textrm{diag}(W'(\pi) \dot{\pi}) T
\nn\\
&= W(\pi) \dot{T} - \Phi (T, \pi) \dot{\pi}, 
\label{eq:dtau2}
\end{align}
where $\Phi (T, \pi) := - W'(\pi) \textrm{diag}(T)$ is a diagonal matrix. 
Substituting \req{dtau2} to the inequality in \relem{dp0}, we can prove the following lemma which is a particular case of \cite[Lemma~5]{YHet2020}. 
\begin{lemma}[\cite{YHet2020}]
\label{lem:dps_bias2}
Suppose that there exists a constant $T_{\max} > 0$ such that $T_k (t) \leq T_{\max}$ holds for all $k \in \cA$ and for all $t \geq 0$. 
Under \reas{func_w}, the system \req{logit} with \req{bias2} is $\delta$-output-passivity-short from $-\dot{T}$ to $y := W(\pi) \dot{\pi}$. 
\end{lemma}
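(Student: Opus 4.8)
\emph{Proof proposal.} The plan is to follow the template of the proof of \relem{dps_bias1}: invoke the $\delta$-passivity inequality of \relem{dp0} for the biased cost $\tau = W(\pi) T$, substitute the expression \req{dtau2} for $\dot\tau$, and show that the term produced by $\Phi(T,\pi)$ is dominated by a multiple of $\|y\|^2$, so that it can be absorbed into the impact term of $\delta$-output-passivity-shortage --- now measured with respect to the output $y = W(\pi)\dot\pi$ rather than $\dot\pi$ itself.

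First I would take $S(\tau,\pi)$ from \req{stfunc1} with $\tau = W(\pi) T$ substituted; this is still positive semi-definite in $(T,\pi)$, because $W(\pi)T$ is an admissible cost vector for every $(T,\pi)$ with $\pi\in\cS$. By \relem{dp0}, $\dot S \le -\dot\tau^\top\dot\pi$ along trajectories of \req{logit}. Using \req{dtau2} and the symmetry of the diagonal matrices $W(\pi)$ and $\Phi(T,\pi)$,
\[
-\dot\tau^\top\dot\pi = -\dot T^\top W(\pi)\dot\pi + \dot\pi^\top\Phi(T,\pi)\dot\pi = -\dot T^\top y + \dot\pi^\top\Phi(T,\pi)\dot\pi,
\]
so everything reduces to bounding the quadratic form $\dot\pi^\top\Phi(T,\pi)\dot\pi$ by $\gamma\|y\|^2$ for a suitable constant $\gamma$.

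The diagonal structure of $\Phi(T,\pi) = -W'(\pi)\textrm{diag}(T)$ is what makes this work: its $k$-th diagonal entry is $-\nabla w_k(\pi_k)\,T_k$, which lies in $[0,\,v^H T_{\max}]$ by \reas{func_w}(iii) together with $0\le T_k\le T_{\max}$; hence $\dot\pi^\top\Phi(T,\pi)\dot\pi \le v^H T_{\max}\|\dot\pi\|^2$. On the other hand $w_k(\pi_k)\ge w^L>0$ by \reas{func_w}(ii), so $\|y\|^2 = \sum_{k\in\cA} w_k(\pi_k)^2\dot\pi_k^2 \ge (w^L)^2\|\dot\pi\|^2$. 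Combining the two gives $\dot S \le -\dot T^\top y + \gamma\|y\|^2$ with $\gamma = v^H T_{\max}/(w^L)^2$, and integrating over $[0,t]$ yields the defining inequality of $\delta$-output-passivity-shortage from $-\dot T$ to $y$.

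The computation is essentially routine once the right output is identified; the genuinely delicate point --- and the reason the statement uses $y = W(\pi)\dot\pi$ instead of $\dot\pi$ --- is that the passivity-violating feedback $\Phi(T,\pi)$ induced by the bias is only controllable relative to $W(\pi)\dot\pi$, so redefining the output is what lets $\gamma$ be a finite constant; keeping $\gamma$ uniform also needs both the cost bound $T_{\max}$ and the strict positivity $w^L>0$ built into \reas{func_w}. I would also make sure to use $T_k\ge 0$ explicitly, since it forces each diagonal entry of $\Phi$ to be nonnegative and thus gives a clean upper bound without absolute values, consistent with the standing assumption that the actual cost is non-negative.
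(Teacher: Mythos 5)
Your proposal is correct and follows essentially the same route as the paper's proof: invoke \relem{dp0} with $\tau = W(\pi)T$, substitute \req{dtau2}, and absorb the bias term into $\frac{v^H T_{\max}}{(w^L)^2}\|y\|^2$. The only cosmetic difference is that the paper rewrites $\dot{\pi} = (W(\pi))^{-1} y$ inside the quadratic form and bounds the resulting matrix, whereas you bound $\dot{\pi}^\top \Phi(T,\pi)\dot{\pi}$ and $\|\dot{\pi}\|^2 \leq \|y\|^2/(w^L)^2$ separately, yielding the identical impact coefficient.
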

\begin{proof}
Consider the function $S(\tau, \pi)$ with substituting $\tau = W(\pi)T$. 
From \relem{dp0} and \req{dtau2}, we obtain
\begin{align}
\dot{S} &\leq - \dot{\tau}^\top \dot{\pi} 
\nn\\
&= - \left( W(\pi) \dot{T} - \Phi (T, \pi) \dot{\pi} \right)^\top \dot{\pi}
\nn\\
& = - \dot{T}^\top W(\pi) \dot{\pi} + \dot{\pi}^\top \Phi (T, \pi) \dot{\pi}
\nn\\
&= - \dot{T}^\top y + y^\top (W(\pi))^{-1} \Phi (T, \pi) (W(\pi))^{-1} y
\nn\\
& \leq - \dot{T}^\top y + \frac{v^H T_{\max}}{(w^L)^2} \| y \|^2. 
\label{eq:dps_bias2}
\end{align}
This completes the proof. 
\end{proof}

The block diagram of Model~2 can be illustrated in \refig{logit_b2_dp} by replacing the input and the output as $- \dot{T}$ and $y$. 
\relem{dps_bias2} reveals that the matrix $\Phi (T, \pi)$ appears as a positive feedback on the original logit model \req{logit}, which is $\delta$-passive. 
The impact coefficient in \req{dps_bias2} increases with $v^H$ which is the strength of the bias $W(\pi)$. 
This result suggests that the conformity bias represented in \req{bias2} destabilizes the decision making.

Both of Lemmas~\ref{lem:dps_bias1} and \ref{lem:dps_bias2} indicate that the positive feedback paths attributed by the conformity biases
violate $\delta$-passivity of the original rational model \req{logit}. 
Therefore, we conclude that the conformity bias works to destabilize the dynamic decision making.

\section{Passivity-based Design of Mechanisms}

In this section, we propose behavior modification mechanisms to lead decision makers to desired social state. 
Particularly, we focus on the output-passivity-shortage of the decision making models, shown in the last section, and design mechanism based on passivity paradigm. 
In the sequel, we assume that the population state $\pi$ is observable, and the desired population state, denoted as $\pi^* \in \textrm{int}(\cS)$, is given as a constant vector.  
In this paper, a mechanism indicates the system to update the actual cost\footnote{In the case of incentive design, the cost $T(t)$ is added to decision makers as an economic input. Whereas, in the case of nudge, it is announced to them as an informational input.} $T$ by using $\pi$ and $\pi^*$. 
The goal in this section is to design mechanisms so as to ensure the inducement $\lim_{t \to \infty} \pi(t) = \pi^*$ for Model~1 and Model~2, respectively. 

\subsection{Mechanism for Model~1}
\label{ssec:nudge1}

In this subsection, we consider a mechanism for Model~1. 
The structure of the mechanism is illustrated in \refig{nudge1}. 
The gray part in \refig{nudge1} implies Model~1, and the block $\Sigma_1$ is the mechanism. 
Notice that the system enclosed by the red line is $\delta$-output-passivity-short, as proved in \relem{dps_bias1}. 
In terms of passivity theory, positive energy generated from an output-passivity-short system can be canceled out by negative feedback of an input-strictly-passive system. 
Hence, we can expect to implement a stable mechanism by designing $\Sigma_1$ so as to satisfy input-strict-passivity of the system enclosed by blue line. 

\begin{figure*}[t]
\begin{center}
\includegraphics[width=160mm]
{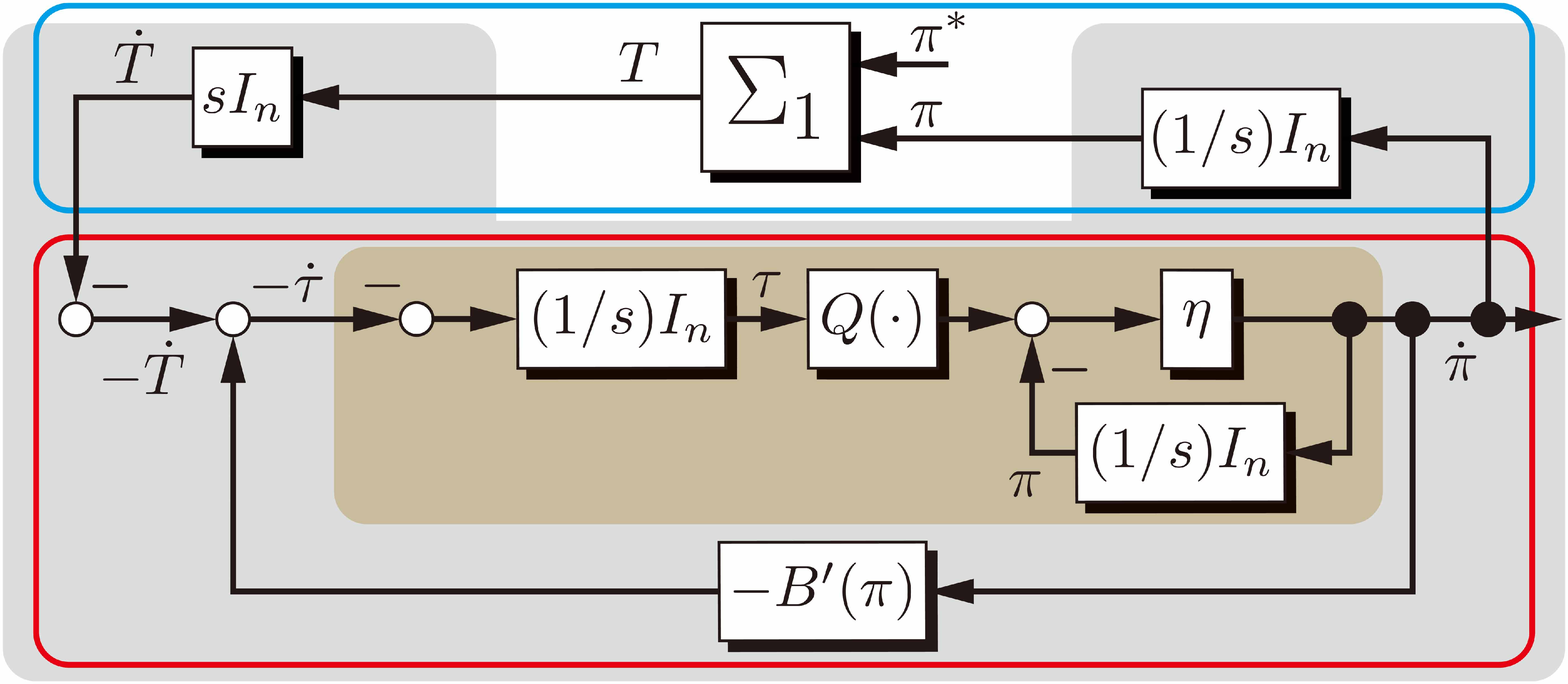}
\caption{Block diagram of a mechanism for Model~1. 
The gray part corresponds to the decision makers. 
The system enclosed by the red line is $\delta$-output-passivity-short (\relem{dps_bias1}). 
The system $\Sigma_1$ is a mechanism for Model~1 to make the system of blue line $\delta$-passive. 
} 
\label{fig:nudge1}
\end{center}
\end{figure*}

Based on the above concept, we propose the mechanism inspired by Proportional-Integral controller, as below: 
\begin{subequations}
\begin{align}
\dot{\mu} &= \rho \left( \pi - \pi^* \right), 
\label{eq:nudge1_1}
\\
T &= \mu + \kappa (\pi - \pi^*), 
\label{eq:nudge1_2}
\end{align}\label{eq:nudge1}\end{subequations}
where $\rho > 0$ and $\kappa > 0$ is a design parameter. 
The mechanism \req{nudge1} satisfies the following lemma. 
\begin{lemma}
\label{lem:dip_n1}
The system \req{nudge1} is $\delta$-input-strictly-passive from $\dot{\pi}$ to $\dot{T}$ for the storage function $H(\pi) := \frac{\rho}{2}\| \pi - \pi^* \|^2$. 
\end{lemma}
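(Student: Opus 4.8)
The plan is to verify the differential dissipation inequality directly for the proposed storage function, since $H(\pi) = \frac{\rho}{2}\|\pi - \pi^*\|^2$ is manifestly positive semi-definite (and, viewed as a map of $(\pi, T)$, simply ignores its second argument, so it is an admissible storage function for the input--output pair $\dot{\pi} \mapsto \dot{T}$ of $\Sigma_1$). Because $H$ is differentiable, by the remark in Section~II it is enough to exhibit a scalar $\rho' > 0$ with $\dot{H} \leq \dot{T}^\top \dot{\pi} - \rho' \|\dot{\pi}\|^2$ along all trajectories of \req{nudge1}, where $\pi$ plays the role of the input $u$ and $T$ that of the state/output $x$.

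First I would differentiate $H$ along trajectories to get $\dot{H} = \rho (\pi - \pi^*)^\top \dot{\pi}$. Next I would differentiate the output equation \req{nudge1_2} and substitute \req{nudge1_1} to eliminate $\dot{\mu}$, obtaining $\dot{T} = \dot{\mu} + \kappa \dot{\pi} = \rho(\pi - \pi^*) + \kappa \dot{\pi}$. Taking the inner product with $\dot{\pi}$ then yields $\dot{T}^\top \dot{\pi} = \rho(\pi - \pi^*)^\top \dot{\pi} + \kappa \|\dot{\pi}\|^2 = \dot{H} + \kappa \|\dot{\pi}\|^2$, i.e. $\dot{H} = \dot{T}^\top \dot{\pi} - \kappa \|\dot{\pi}\|^2$. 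Since $\kappa > 0$, this is exactly the differential form of $\delta$-input-strict-passivity with $\rho' = \kappa$; integrating from $0$ to $t$ recovers the integral inequality in the definition, for every input $\pi$ and every initial condition.

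There is essentially no obstacle in this argument: the PI structure of \req{nudge1} makes the computation an exact identity rather than a mere bound, so no estimates of bias parameters are needed here (those enter later when $\Sigma_1$ is interconnected with the output-passivity-short decision model of \relem{dps_bias1}). The only points I would state carefully are that $H \geq 0$ and depends only on $\pi$, and that the strict-passivity margin is precisely the proportional gain $\kappa$, a fact that will be used when choosing $\kappa$ to dominate the impact coefficient $c^H$.
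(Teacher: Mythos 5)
Your proposal is correct and follows essentially the same argument as the paper: compute $\dot{H} = \rho(\pi-\pi^*)^\top\dot{\pi} = \dot{\mu}^\top\dot{\pi}$ and use $\dot{T} = \dot{\mu} + \kappa\dot{\pi}$ to obtain the exact identity $\dot{H} = \dot{T}^\top\dot{\pi} - \kappa\|\dot{\pi}\|^2$, which is the differential form of $\delta$-input-strict-passivity with margin $\kappa > 0$. Your added remarks (that the strictness coefficient equals the proportional gain and that the penalty is on the input $\dot{\pi}$) are consistent with the paper's definition and later use in Theorem~\ref{th:conv1}.
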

\begin{proof}
From \req{nudge1_2}, $\dot{T} = \dot{\mu} + \kappa \dot{\pi}$ holds. 
Noticing this result and \req{nudge1_1}, we have
\begin{align}
\dot{H} &= \rho (\pi - \pi^*)^\top \dot{\pi} 
= \dot{\mu}^\top \dot{\pi} 
\nn\\
&= \dot{T}^\top \dot{\pi} - \kappa \| \dot{\pi} \|^2 . 
\label{eq:dip_n1}
\end{align}
This result shows the $\delta$-input-strict-passivity of \req{nudge1}. 
\end{proof}

The result in \relem{dip_n1} suggests that the system from $\dot{\pi}$ to $\dot{T}$, enclosed by the blue line in \refig{nudge1}, is $\delta$-input-strictly-passive if we apply \req{nudge1} to $\Sigma_1$. 
Thus, it is expected that the mechanism \req{nudge1} cancels out the positive energy in \req{dps_bias1} and induces the population state $\pi$ to the desired state $\pi^*$. 

As a preparation for convergence analysis, we introduce the following lemma about the storage function $S$ under the mechanism \req{nudge1}. 
\begin{lemma}
\label{lem:Tinf1}
Consider the system \req{nudge1}. 
Let a signal $\pi(t) \in \textrm{int}(\cS)$ achieves $\| T (t) \| \to \infty$ when $t \to \infty$. 
For this specific signal $\pi(t)$, $S(T + B(\pi), \pi) \to \infty$ holds for any $\pi^* \in \textrm{int}(\cS)$. 
\end{lemma}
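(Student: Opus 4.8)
The plan is to turn the claim into a coercivity property of $S$ in the biased cost $\tau=T+B(\pi)$ and then to exploit the conserved quantity created by the integrator \req{nudge1_1}. \emph{Step 1 (a workable form of $S$):} carrying out the minimization in \req{stfunc1} — a Gibbs/log-sum-exp variational problem whose unique minimizer over $\cS$ is $Q(\tau)$ — gives $\min_{\omega\in\cS}\big(\omega^\top\tau+\tfrac1\beta\sum_{l}\omega_l\log\omega_l\big)=-\tfrac1\beta\log\sum_{m\in\cA}e^{-\beta\tau_m}$, so that, writing $s_l:=\tau_l-\min_m\tau_m\ge0$ and using $\bone_n^\top\pi=1$,
\begin{align*}
	S(\tau,\pi)=\eta\sum_{l\in\cA}\pi_l s_l+\tfrac{\eta}{\beta}\sum_{l\in\cA}\pi_l\log\pi_l+\tfrac{\eta}{\beta}\log\sum_{m\in\cA}e^{-\beta s_m}.
\end{align*}
The last two terms stay in a fixed bounded interval on $\cS$ (the entropy lies in $[-\tfrac1\beta\log n,0]$ and $1\le\sum_m e^{-\beta s_m}\le n$), so it suffices to show $\sum_{l\in\cA}\pi_l(\tau_l-\min_m\tau_m)\to\infty$.

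\emph{Step 2 (a conservation law and reduction to $\mu$):} pre-multiplying \req{nudge1_1} by $\bone_n^\top$ and using $\bone_n^\top\pi\equiv\bone_n^\top\pi^*=1$ shows $\bone_n^\top\mu(t)$ is constant, hence $\bone_n^\top T(t)$ is constant and, by \reas{func_b}(ii), $\bone_n^\top\tau(t)=\bone_n^\top T(t)+\sum_k b_k(\pi_k(t))$ stays bounded. Since $\|B(\pi)\|$ and $\|\kappa(\pi-\pi^*)\|$ are bounded, the hypothesis $\|T(t)\|\to\infty$ forces $\|\tau(t)\|\to\infty$, and with $\bone_n^\top\tau(t)$ bounded this forces $\max_m\tau_m(t)-\min_m\tau_m(t)\to\infty$. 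As $\tau_l-\min_m\tau_m$ and $\mu_l-\min_m\mu_m$ differ by a bounded amount, the target reduces to $\sum_l\pi_l(\mu_l-\min_m\mu_m)\to\infty$; equivalently, writing $\tilde\mu:=\mu-\tfrac1n(\bone_n^\top\mu)\bone_n$, which carries all the divergence since $\bone_n^\top\mu$ is constant, one must show $\pi^\top\tilde\mu-\min_m\tilde\mu_m\to\infty$, where $\|\tilde\mu(t)\|\to\infty$, $\bone_n^\top\tilde\mu\equiv0$ and $\dot{\tilde\mu}=\rho(\pi-\pi^*)$.

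\emph{Step 3 (the integrator and $\pi^*\in\textrm{int}(\cS)$):} substituting $\pi=\tfrac1\rho\dot{\tilde\mu}+\pi^*$ gives $\pi^\top\tilde\mu=\tfrac1{2\rho}\tfrac{d}{dt}\|\tilde\mu\|^2+\pi^{*\top}\tilde\mu$, and with $p^*:=\min_l\pi^*_l>0$ — this is precisely where $\pi^*\in\textrm{int}(\cS)$ enters — together with $\bone_n^\top\tilde\mu=0$ (so $\max_m\tilde\mu_m\ge0\ge\min_m\tilde\mu_m$),
\begin{align*}
	\pi^{*\top}\tilde\mu-\min_m\tilde\mu_m&=\sum_l\pi^*_l(\tilde\mu_l-\min_m\tilde\mu_m)\\
	&\ge p^*\big(\max_m\tilde\mu_m-\min_m\tilde\mu_m\big)\ \ge\ \tfrac{p^*}{\sqrt n}\|\tilde\mu\|\ \to\ \infty .
\end{align*}
Combining the steps yields, for a constant $c$,
\begin{align*}
	\tfrac1\eta S(\tau(t),\pi(t))\ \ge\ \tfrac1{2\rho}\tfrac{d}{dt}\|\tilde\mu(t)\|^2+\tfrac{p^*}{\sqrt n}\|\tilde\mu(t)\|-c .
\end{align*}
I expect the main obstacle to be upgrading this to the genuine limit $S\to\infty$: the coercive term $\tfrac{p^*}{\sqrt n}\|\tilde\mu(t)\|$ dominates except possibly on recurrent intervals where $\|\tilde\mu(t)\|^2$ is decreasing sharply. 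Since $\dot{\tilde\mu}=\rho(\pi-\pi^*)$ is bounded on $\cS$, $\|\tilde\mu(t)\|$ is Lipschitz, and a short trajectory argument — for instance tracking $\min_m\mu_m(t)$, whose derivative equals $\rho(\pi_{k(t)}(t)-\pi^*_{k(t)}(t))$ at the minimizing index $k(t)$ and hence has modulus at most $\rho$, together with a lower bound on $\pi_{j(t)}(t)$ at the maximizing index $j(t)$ obtained by integrating $\dot\mu_{j(t)}$ over a window of length $O(1)$ — rules such intervals out and closes the proof (alternatively one invokes the radial-unboundedness estimate of \reapp{pr_Srad}).
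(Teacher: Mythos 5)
Your Steps 1 and 2 are correct, and they are essentially the paper's own argument in disguise: Step 1 reproduces the explicit evaluation of the minimization in \req{stfunc1} as a log-sum-exp that the paper carries out inside \relem{Srad} (\reapp{pr_Srad}), and Step 2 is exactly the conservation argument of \reapp{pr_Tinf1} ($\bone_n^\top\dot\mu\equiv 0$, hence $\bone_n^\top T$ constant, hence the spread $\max_m\tau_m-\min_m\tau_m\to\infty$ using boundedness of $B$ from \reas{func_b}). Where you depart from the paper is the final coercivity step: the paper closes with the pointwise bound $\sum_l\pi_l(\tau_l-\tau^L)\geq\veps(\tau^H-\tau^L)$, which uses a lower bound $\pi_k>\veps$ on the population state itself (this is exactly what \relem{Srad} supplies), whereas you try to extract coercivity from the interiority of $\pi^*$ together with the integrator dynamics $\dot{\tilde\mu}=\rho(\pi-\pi^*)$.

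That last step is where the genuine gap lies, and you have not closed it. Your inequality $\tfrac1\eta S\geq\tfrac1{2\rho}\tfrac{d}{dt}\|\tilde\mu\|^2+\tfrac{p^*}{\sqrt n}\|\tilde\mu\|-c$ does not imply $S\to\infty$: since $\|\dot{\tilde\mu}\|=\rho\|\pi-\pi^*\|\leq\sqrt2\,\rho$, the derivative term satisfies $\tfrac1{2\rho}\bigl|\tfrac{d}{dt}\|\tilde\mu\|^2\bigr|\leq\sqrt2\,\|\tilde\mu\|$, i.e.\ it is of exactly the same order as the coercive term and with a larger constant (note $p^*/\sqrt n\leq n^{-3/2}$), so the right-hand side can stay bounded or even diverge to $-\infty$ on recurrent intervals. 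The ``short trajectory argument'' you sketch to rule these intervals out is not carried out, and it is doubtful it can be within the hypotheses you are using at that point ($\pi(t)\in\textrm{int}(\cS)$ pointwise, $\|\tilde\mu\|\to\infty$, $\dot{\tilde\mu}=\rho(\pi-\pi^*)$): divergence of $\|\tilde\mu\|$ only constrains the \emph{time average} of $\pi-\pi^*$, not the value of $\pi$ at selected instants, so a signal that grows $\tilde\mu$ most of the time but briefly drives $\pi_{j}(t_k)$ down to order $1/\|\tilde\mu(t_k)\|^2$ at the maximizing index $j$ keeps $\pi^\top\tilde\mu-\min_m\tilde\mu_m$ (and hence your lower bound on $S$) bounded along the times $t_k$. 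What rescues the argument is precisely a uniform interiority bound $\pi_k\geq\veps>0$ on the population state, which is what the paper's \relem{Srad} uses; but once you grant that, Steps 1--2 plus that bound \emph{are} the paper's proof and your Step 3 is superfluous. Your parenthetical fallback ``invoke \reapp{pr_Srad}'' is therefore not an alternative but the missing ingredient itself, so as written the proposal does not provide an independent, complete proof of the lemma.
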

\begin{proof}
See \reapp{pr_Tinf1}. 
\end{proof}

We are now ready to prove the following theorem. 
In the proof, we use the notation $\cL_\infty$ as the set of all signals $u: [0,\infty) \to \bR^n$ satisfying $\sup_{t \in [0,\infty)} \| u(t) \| < \infty$. 
\begin{theorem}
\label{th:conv1}
Consider the feedback system composed of \req{logit}, \req{bias1} and \req{nudge1}. 
When $\kappa > c^H$ holds, the population state $\pi$ achieves $\lim_{t \to \infty} \pi(t) = \pi^*$. 
\end{theorem}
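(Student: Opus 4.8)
The plan is to use $V := S(T + B(\pi), \pi) + H(\pi)$ as a Lyapunov function for the closed loop, where $S$ is the storage function \req{stfunc1} evaluated along the biased cost $\tau = T + B(\pi)$ (as in \relem{dps_bias1}) and $H(\pi) = \frac{\rho}{2}\|\pi - \pi^*\|^2$ is the storage function of \relem{dip_n1}. The interconnection \req{logit}--\req{bias1}--\req{nudge1} is exactly the negative feedback of a $\delta$-input-strictly-passive block (the mechanism) around a $\delta$-output-passivity-short one (Model~1), so adding the dissipation inequalities \req{dps_bias1} and \req{dip_n1} the cross terms $\pm\dot{T}^\top\dot{\pi}$ cancel and one gets $\dot{V} \leq -(\kappa - c^H)\|\dot{\pi}\|^2$. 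Since $\kappa > c^H$, $V$ is nonincreasing; since $\tau(0)$ and $\pi(0) \in \textrm{int}(\cS)$ are finite, $V(0) < \infty$, and hence $S(T+B(\pi),\pi)$, $H(\pi)$ and $\int_0^\infty\|\dot{\pi}\|^2\,dt$ are all bounded.

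First I would establish boundedness of all signals. The state $\pi$ stays in $\textrm{int}(\cS)$, so it is bounded and $\dot{\pi} = \eta(Q(\tau)-\pi)$ is bounded as a difference of two points of $\cS$. Boundedness of $S(T+B(\pi),\pi)$ together with \relem{Tinf1} rules out $\|T(t)\| \to \infty$; to upgrade this to $T \in \cL_\infty$ I would use that $\bone_n^\top\dot{\mu} = \rho\,\bone_n^\top(\pi - \pi^*) = 0$, so the component of $\mu$, hence of $T = \mu + \kappa(\pi-\pi^*)$, along $\bone_n$ is frozen, while the orthogonal direction is controlled by the radial unboundedness of $S$ recorded after \relem{dp0} (\reapp{pr_Srad}) — this is exactly the content of \relem{Tinf1}. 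With $T \in \cL_\infty$ one gets $\mu, \tau \in \cL_\infty$, the logit output $Q(\tau)$ stays in a compact subset of $\textrm{int}(\cS)$, and a standard argument on \req{logit} shows $\pi(t)$ itself remains bounded away from $\partial\cS$; consequently $\dot{T} = \rho(\pi-\pi^*) + \kappa\dot{\pi}$, $\dot{\tau} = \dot{T} + B'(\pi)\dot{\pi}$ and $\ddot{\pi} = \eta(\nabla Q(\tau)\dot{\tau} - \dot{\pi})$ are all bounded, so $\dot{\pi}$ is uniformly continuous.

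It then remains to pass from $\dot{V} \le 0$ to $\pi \to \pi^*$. The closed loop with state $(\pi,\mu)$ is autonomous and its trajectory is precompact in $\textrm{int}(\cS)\times\bR^n$, so by LaSalle's invariance principle it converges to the largest invariant set contained in $\{(\pi,\mu):\dot{V}=0\}\subseteq\{\dot{\pi}=0\}$. On such a set $\dot{\pi}\equiv 0$, hence $\ddot{\pi}\equiv 0$, i.e.\ $\eta\,\nabla Q(\tau)\dot{\tau}=0$ with $\dot{\tau} = \dot{\mu} = \rho(\pi-\pi^*)$; since $\ker\nabla Q(\tau) = \textrm{span}\{\bone_n\}$ on $\textrm{int}(\cS)$ and $\bone_n^\top(\pi-\pi^*)=0$, this forces $\pi = \pi^*$, so $\pi(t)\to\pi^*$. (Equivalently, one can avoid LaSalle: $\dot{\pi}\in\cL_2$ and uniformly continuous gives $\dot{\pi}\to 0$ by Barbalat, and propagating $\dot{\pi}\to0$ through the same identities yields $\nabla Q(\tau)(\pi-\pi^*)\to0$ with $\nabla Q(\tau)$ uniformly negative definite on $\bone_n^\perp$, whence $\pi\to\pi^*$.) I expect the boundedness step to be the main obstacle: $S$ and $Q$ are invariant under adding multiples of $\bone_n$ to $\tau$, so $S$ is only radially unbounded ``modulo $\bone_n$'', and one must exploit that the integral action \req{nudge1_1} cannot push $T$ in the $\bone_n$ direction — precisely what \relem{Tinf1} packages.
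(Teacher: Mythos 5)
Your proposal is correct and follows essentially the same route as the paper: the same Lyapunov function $V = S(T+B(\pi),\pi)+H(\pi)$, the same cancellation yielding $\dot V \le -(\kappa-c^H)\|\dot{\pi}\|^2$, \relem{Tinf1} to secure $T\in\cL_\infty$, and LaSalle's invariance principle. The only (valid) deviation is in characterizing the invariant set: you use $\ker\nabla Q(\tau)=\mathrm{span}\{\bone_n\}$ together with $\bone_n^\top(\pi-\pi^*)=0$, whereas the paper argues that $\pi\neq\pi^*$ would make $\dot{T}=\rho(\pi-\pi^*)$ a nonzero constant, forcing $T$ to diverge and contradicting $T\in\cL_\infty$.
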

\begin{proof}
We define the function $V_1 := S(T + B(\pi), \pi) + H(\pi)$. 
From \relem{dps_bias1} and \relem{dip_n1}, we obtain
\begin{align}
\dot{V}_1 &\leq - \dot{T}^\top \dot{\pi} + c^H \| \dot{\pi} \|^2 + \dot{T}^\top \dot{\pi} - \kappa \| \dot{\pi} \|^2
\nn\\
&= - (\kappa - c^H) \| \dot{\pi} \|^2 \leq 0
\label{eq:energy1}
\end{align}
under $\kappa > c^H$. 
Denote the initial values of $S$ and $H$ by $S_0$ and $H_0$, respectively. 
Then, $V_1 \leq S_0 + H_0 < \infty$ holds for any time $t \geq 0$. 
Remarking $\pi \in \textrm{int}(\cS)$, this implies $T \in \cL_\infty$ due to \relem{Tinf1}. 
Thus, we can apply the LaSalle's invariance principle~\cite{K2002}, and hence solution of \req{logit}, \req{bias1} and \req{nudge1} for any initial conditions $T(0) \in \bR^n$ and $\pi(0) \in \textrm{int}(\cS)$ converges to the largest invariant set satisfying $\dot{V}_1 = 0$. 

Consider the state trajectories such that $\dot{V}_1 \equiv 0$ holds. 
From \req{energy1}, $\dot{V}_1 \equiv 0$ implies $\dot{\pi} \equiv 0$. 
Hence, $\pi$ should be constant. 
Focusing on \req{nudge1}, $\dot{T} \equiv \dot{\mu}$ holds and $\dot{\mu} = \rho(\pi - \pi^*)$ should be constant. 
If $\pi \neq \pi^*$, $\dot{T}\neq 0$ identically holds and thus $T$ should diverge. 
However, the divergence contradicts $T \in \cL_\infty$. 
Accordingly, $T$ is constant and $\pi \equiv \pi^*$ is satisfied. 

As a result, by invoking the LaSalle's invariance principle, we can prove that $\pi$ asymptotically converges to $\pi^*$. 
\end{proof}

\reth{conv1} shows that the mechanism \req{nudge1} can induce the decision makers of Model~1 to desired state if we design the parameter $\kappa$ large enough. 
The right hand side of the convergence condition $\kappa > c^H$ is given by the bias-strength. 
Hence, for the strongly biased decision makers, $\kappa$ is needed to be set large.

\subsection{Mechanism for Model~2}

Let us consider to design a mechanism for Model~2. 
Its structure is illustrated in \refig{nudge2}: The gray part is the decision makers, and $\Sigma_2$ is the mechanism. 
Similar to \ressec{nudge1}, we design $\Sigma_2$ to guarantee the passivity from $y$ to $\dot{T}$. 
\begin{figure*}[t]
\begin{center}
\includegraphics[width=160mm]
{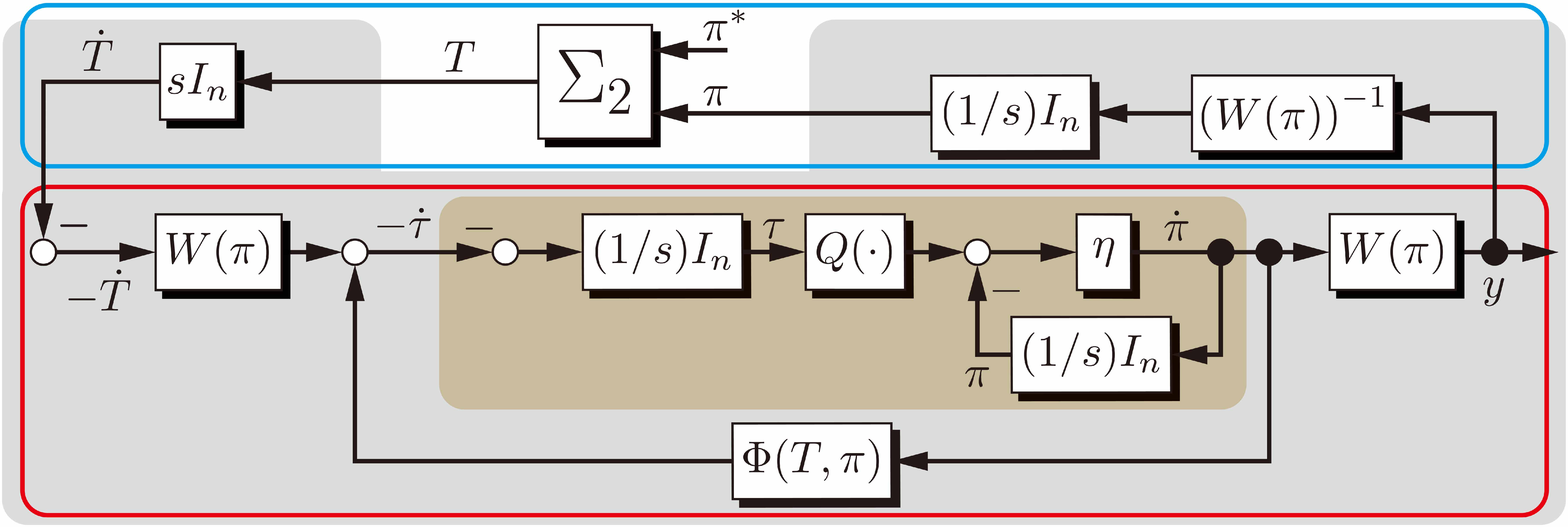}
\caption{Block diagram of a mechanism for Model~2. 
The gray part corresponds to the decision makers. 
The system enclosed by the red line is $\delta$-output-passivity-short (\relem{dps_bias2}). 
The system $\Sigma_2$ is a mechanism for Model~2 to make the system in blue line $\delta$-passive. 
} 
\label{fig:nudge2}
\end{center}
\end{figure*}

Applying the mechanism \req{nudge1} to $\Sigma_2$, the following $\delta$-passivity is satisfied, that is proved in the previous work~\cite{YHet2020}. 
\begin{fact}[\cite{YHet2020}]
Under \reas{func_w}, the system \req{nudge1} is $\delta$-input-strictly-passive from $y$ to $\dot{T}$.
\end{fact}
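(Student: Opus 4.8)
The plan is to follow the proof of \relem{dip_n1} almost verbatim, except that the quadratic storage function used there must be replaced by one weighted by the bias matrix $W(\pi)$, so that the port variable $y = W(\pi)\dot\pi$, rather than $\dot\pi$ itself, is what appears in the supply rate. First I would differentiate \req{nudge1_2} and substitute \req{nudge1_1}, which gives $\dot T = \dot\mu + \kappa\dot\pi = \rho(\pi - \pi^*) + \kappa\dot\pi$. Left-multiplying by $y^\top = \dot\pi^\top W(\pi)$ and using that $W(\pi)$ is diagonal yields the algebraic identity
\begin{align*}
\dot T^\top y = \rho\,(\pi - \pi^*)^\top W(\pi)\,\dot\pi + \kappa\,\dot\pi^\top W(\pi)\,\dot\pi .
\end{align*}

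The key step is to introduce the storage function whose gradient reproduces the first term on the right-hand side, namely
\begin{align*}
H(\pi) := \rho \sum_{k \in \cA} \int_{\pi_k^*}^{\pi_k} w_k(s)\,(s - \pi_k^*)\, ds .
\end{align*}
By item (i) of \reas{func_w} each $w_k$ is continuous on $[0,1]$, so $H$ is well defined and continuously differentiable; and since $w_k > 0$, each summand is non-negative regardless of the sign of $\pi_k - \pi_k^*$, so $H$ is positive semi-definite with $H(\pi^*) = 0$. The fundamental theorem of calculus then gives $\dot H = \rho \sum_{k \in \cA} w_k(\pi_k)(\pi_k - \pi_k^*)\dot\pi_k = \rho\,(\pi - \pi^*)^\top W(\pi)\,\dot\pi$, so comparing with the identity above we get $\dot H = \dot T^\top y - \kappa\,\dot\pi^\top W(\pi)\,\dot\pi$.

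Finally I would bound the dissipation term by a multiple of $\|y\|^2$. By item (ii) of \reas{func_w}, $w_k(\pi_k) \ge w^L$ for every $k$, hence $\dot\pi^\top W(\pi)\dot\pi \ge w^L\|\dot\pi\|^2$; on the other hand $\|y\|^2 = \sum_k w_k(\pi_k)^2\dot\pi_k^2 \le (w^H)^2\|\dot\pi\|^2$, so $\dot\pi^\top W(\pi)\dot\pi \ge \frac{w^L}{(w^H)^2}\|y\|^2$. Therefore $\dot H \le \dot T^\top y - \frac{\kappa w^L}{(w^H)^2}\|y\|^2$ with $\kappa w^L/(w^H)^2 > 0$, which is exactly $\delta$-input-strict-passivity from $y$ to $\dot T$. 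The step I expect to be the real crux is the choice of $H$: the naive quadratic $\frac{\rho}{2}\|\pi - \pi^*\|^2$ of \relem{dip_n1} gives $\dot H = \dot T^\top\dot\pi - \kappa\|\dot\pi\|^2$, and substituting $\dot\pi = W(\pi)^{-1}y$ there leaves the cross term $\dot T^\top\big(W(\pi)^{-1} - I\big)y$, which need not be sign-definite and cannot be absorbed; putting the factor $w_k$ inside the integrand is precisely what makes $y$ enter the supply rate cleanly and keeps the residual term negative semi-definite. Everything else is a routine diagonal-matrix estimate, and the computation coincides with the one in \cite{YHet2020}.
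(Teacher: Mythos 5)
Your proof is correct and follows essentially the same route as the paper: your storage $H(\pi)=\rho\sum_{k\in\cA}\int_{\pi_k^*}^{\pi_k}w_k(s)(s-\pi_k^*)\,ds$ is precisely the Bregman-type storage of \relem{dip_n2} specialized to the unsaturated case $\zeta_k=\nabla F_k(\pi_k)$ (i.e., the situation of \req{nudge1}), just written as an explicit integral, and it gives $\dot H=\dot T^\top y-\kappa\,\dot\pi^\top W(\pi)\dot\pi$ as required. The only cosmetic difference is your dissipation constant $\kappa w^L/(w^H)^2$, where the slightly sharper $\kappa/w^H$ follows from $\dot\pi^\top W(\pi)\dot\pi=y^\top (W(\pi))^{-1}y\ge \|y\|^2/w^H$; either constant is positive, so $\delta$-input-strict-passivity from $y$ to $\dot T$ holds.
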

By using \req{nudge1} for $\Sigma_2$, certainly, the whole system in \refig{nudge2} is composed of the feedback of a $\delta$-output-passivity-short system and a $\delta$-input-strictly-passive system. 
However, the analysis in~\cite{YHet2020} leaves an important issue about signal boundedness. 
In~\cite{YHet2020}, the boundedness of the signal $T$ is given by an assumption, which is not theoretically guaranteed. 
In particular, to certify the upper bound of signal $T$, denoted by $T_{\max}$, is the most significant problem to cancel out the positive energy in \req{dps_bias2}. 
Therefore, we should redesign a mechanism for Model~2 to guarantee the existence of $T_{\max}$, with keeping passivity.

To solve the above issue, we propose the following mechanism. 
\begin{subequations}
\begin{align}
\dot{\mu} &= \min \left \{ \rho \left( \pi - \pi^* \right), - \alpha \mu \right \} , \ \mu(0) \leq 0,  
\label{eq:nudge2_1}
\\
T &= \bar{T} \bone_n + \mu + \kappa (\pi - \pi^*), 
\label{eq:nudge2_2}
\end{align}\label{eq:nudge2}\end{subequations}
where $\rho > 0$ and $\kappa > 0$ are design parameters, $\alpha > 0$ is a parameter satisfying $\alpha > \frac{1}{2 \kappa}$, and $\bar{T} > 0$ is a constant. 
Remark that \req{nudge2_1} ensures $\mu \leq 0$ for all time. 
Thus, the system \req{nudge2} guarantees the existence of $T_{\max} < \bar{T} + \kappa$ such that $T_k \leq T_{\max} \ \forall k \in \cA$ holds for all time. 
About the time derivative of $T$, $\dot{T} = \dot{\mu} + \kappa \dot{\pi}$ holds from \req{nudge2_2}.

For the passivity analysis of \req{nudge2}, we consider the primitive function of $w_k$. 
Under \reas{func_w}, $w_k$ is continuous. 
Hence, we can define the following function: 
\begin{align}
F_k (\pi_k) &:= 
\left\{
\begin{array}{cl}
 \displaystyle{\int_0^{\pi_k} \left( \int_0^{\phi} w_k (\theta) d \theta \right)  d\phi} & \textrm{if} \ \pi_k \in [0, 1],
\\
\infty & \textrm{otherwise}.
\end{array}
\right.
\end{align}
For $\pi_k \in (0, 1)$, $\nabla^2 F_k (\pi_k) = w_k (\pi_k) \geq w^L$ clearly holds and this reveals that $F_k$ is convex. 
We denote $F^\star_k : \bR \to \bR$ as the convex conjugate of $F_k$ \cite{BV2004}. 
Due to the convexity of $F_k$, 
\begin{align}
\pi_k = \nabla F^\star_k (\nabla F_k(\pi_k)), \ \pi_k \in (0, 1) 
\end{align}
is satisfied~\cite{BV2004}. 
By using these results, we can prove the following lemma. 
\begin{lemma}
\label{lem:dip_n2}
Define $\zeta_k := \nabla F_k(\min \{ \pi_k, \pi_k^* - \frac{\alpha}{\rho} \mu_k \})$ and $\zeta_k^* := \nabla F_k (\pi_k^*)$ for each $k \in \cA$. 
Under \reas{func_w}, the system \req{nudge2} is $\delta$-input-strictly-passive from $y$ to $\dot{T}$
for the storage function
\begin{align}
U(\mu, \pi) &:= \sum_{k \in \cA} U_k (\mu_k, \pi_k), 
\nn\\
U_k (\mu_k, \pi_k) &:= \rho \left( F^\star_k (\zeta_k) - F^\star_k (\zeta_k^*) - (\zeta_k - \zeta_k^*) \nabla F^\star_k (\zeta_k^*)  \right).
\nn
\end{align}
\end{lemma}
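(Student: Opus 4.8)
The plan is to recognize each $U_k$ as a Bregman divergence of the convex conjugate $F^\star_k$ and differentiate it along trajectories, exploiting that the two $\min$ operators — the one hidden in $U_k$ and the one in \req{nudge2_1} — switch branches simultaneously. First, by \reas{func_w} we have $\nabla^2 F_k = w_k \ge w^L > 0$ on $(0,1)$, so $F_k$ is strictly convex there and $\nabla F_k$ is an increasing bijection from $(0,1)$ onto its range, with $F^\star_k$ differentiable on that range and $\nabla F^\star_k = (\nabla F_k)^{-1}$. Writing $\xi_k := \min\{\pi_k, \pi_k^* - \frac{\alpha}{\rho}\mu_k\}$, one checks $\xi_k \in (0,1)$ for all $t$: $\pi_k \in (0,1)$ since $\pi \in \textrm{int}(\cS)$, and $\pi_k^* - \frac{\alpha}{\rho}\mu_k \ge \pi_k^* > 0$ because \req{nudge2_1} enforces $\mu \le 0$. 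Hence the conjugate identities give $\nabla F^\star_k(\zeta_k) = \xi_k$ and $\nabla F^\star_k(\zeta_k^*) = \pi_k^*$, so that $U_k = \rho\big(F^\star_k(\zeta_k) - F^\star_k(\zeta_k^*) - (\zeta_k - \zeta_k^*)\pi_k^*\big)$ is $\rho$ times the first-order Taylor remainder of the convex function $F^\star_k$ at $\zeta_k^*$; therefore $U_k \ge 0$ (and $U = 0$ at $\mu = 0$, $\pi = \pi^*$), confirming $U$ is an admissible storage function.

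Next I differentiate. Since $\zeta_k^*$ is constant and $\dot\zeta_k = \nabla^2 F_k(\xi_k)\dot\xi_k = w_k(\xi_k)\dot\xi_k$,
\begin{align*}
\dot U_k = \rho\big(\nabla F^\star_k(\zeta_k) - \nabla F^\star_k(\zeta_k^*)\big)\dot\zeta_k = \rho(\xi_k - \pi_k^*)\,w_k(\xi_k)\,\dot\xi_k .
\end{align*}
The crucial structural fact is that $\xi_k = \pi_k$ exactly on the branch $\rho(\pi_k - \pi_k^*) \le -\alpha\mu_k$, which is precisely where $\dot\mu_k = \rho(\pi_k - \pi_k^*)$ in \req{nudge2_1}; on the complementary branch $\xi_k = \pi_k^* - \frac{\alpha}{\rho}\mu_k$ and $\dot\mu_k = -\alpha\mu_k$. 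Using $\dot T_k = \dot\mu_k + \kappa\dot\pi_k$ from \req{nudge2_2} and $y_k = w_k(\pi_k)\dot\pi_k$: on the first branch $\dot U_k = \dot\mu_k y_k$, so $\dot T_k y_k - \dot U_k = \kappa\dot\pi_k y_k = \kappa y_k^2/w_k(\pi_k) \ge (\kappa/w^H)\,y_k^2$; on the second branch $\dot\xi_k = -\frac{\alpha}{\rho}\dot\mu_k = \frac{\alpha^2}{\rho}\mu_k$ and $\xi_k - \pi_k^* = -\frac{\alpha}{\rho}\mu_k$, hence $\dot U_k = -\frac{\alpha^3}{\rho}w_k(\xi_k)\mu_k^2 \le 0$ and $\dot T_k y_k - \dot U_k = \kappa\dot\pi_k y_k - \alpha\mu_k y_k + \frac{\alpha^3}{\rho}w_k(\xi_k)\mu_k^2$, where completing the square in $\mu_k$ against the $\mu_k^2$ term absorbs the indefinite cross-term $-\alpha\mu_k y_k$ and leaves $\dot T_k y_k - \dot U_k \ge \big(\frac{\kappa}{w^H} - \frac{\rho}{4\alpha w^L}\big)y_k^2$. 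Summing over $k \in \cA$ gives $\dot U \le \dot T^\top y - \rho'\|y\|^2$ with a positive constant $\rho'$ determined by $\kappa,\rho,\alpha,w^L,w^H$; the standing condition $\alpha > \frac{1}{2\kappa}$ (together with $w^L \le w_k \le w^H$) is exactly what renders $\rho' > 0$, which is the claimed $\delta$-input-strict-passivity from $y$ to $\dot T$.

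The main obstacle is the nonsmoothness introduced by the two $\min$ operators: one must justify differentiating $U$ (and $\mu$) along solutions at the instants where the active branch changes. Since the right-hand side of \req{nudge2_1} is locally Lipschitz in the state and continuous in time, $\mu$ is $C^1$ and $\xi_k, \zeta_k, U$ are absolutely continuous and piecewise $C^1$ along trajectories, so the integral form of $\delta$-passivity only needs the inequality above to hold for the almost-everywhere derivative; at the measure-zero set of switching instants one checks the relevant one-sided derivative still obeys it (the two branch expressions for $\dot U_k$ agree in sign there). The only secondary point to watch — already handled above — is keeping $\xi_k$ strictly inside $(0,1)$ so that the conjugate identity $\nabla F^\star_k \circ \nabla F_k = \mathrm{id}$ is legitimate, which is exactly where the built-in constraint $\mu \le 0$ of \req{nudge2_1} is used.
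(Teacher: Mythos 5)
Your overall route is the same as the paper's: the Bregman-type reading of $U_k$, the verification that $\xi_k=\min\{\pi_k,\pi_k^*-\frac{\alpha}{\rho}\mu_k\}$ stays in $(0,1)$ so that $\nabla F^\star_k\circ\nabla F_k=\mathrm{id}$ applies, the two-branch case analysis keyed to the $\min$ in \req{nudge2_1}, and an almost-everywhere/one-sided treatment of the switching instants (the paper uses the upper Dini derivative; your absolute-continuity argument is an acceptable substitute). The branch where $\dot\mu_k=\rho(\pi_k-\pi_k^*)$ is handled correctly. The genuine gap is the closing step on the branch $\dot\mu_k=-\alpha\mu_k$: you arrive at $\dot T_k y_k-\dot U_k\geq\bigl(\frac{\kappa}{w^H}-\frac{\rho}{4\alpha w^L}\bigr)y_k^2$ and assert that the standing condition $\alpha>\frac{1}{2\kappa}$ is exactly what makes this coefficient positive. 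It is not: positivity requires $\alpha>\frac{\rho w^H}{4\kappa w^L}$, a condition involving $\rho$ and the bias bounds that is not among the lemma's hypotheses. For instance, with $w^L=w^H=1$, $\kappa=\alpha=1$, $\rho=10$, the hypothesis $\alpha>\frac{1}{2\kappa}$ holds but your coefficient is $1-\frac{10}{4}<0$, so the proof as written does not establish $\delta$-input-strict passivity under the stated assumptions.

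The paper closes this branch differently, and that is where its hypothesis on $\alpha$ actually enters: since $\xi_k=\pi_k^*-\frac{\alpha}{\rho}\mu_k<\pi_k$ on that branch and $w_k$ is decreasing, one has $w_k(\xi_k)\geq w_k(\pi_k)$, and the square is then completed with the specific weight $\frac{1}{2\kappa}$ around $\dot\mu_k=\dot T_k-\kappa\dot\pi_k$, i.e. $-\frac{1}{2\kappa}w_k(\pi_k)(\dot T_k-\kappa\dot\pi_k)^2=\dot T_k y_k-\frac{1}{2\kappa}w_k(\pi_k)\dot T_k^2-\frac{\kappa}{2}w_k(\pi_k)\dot\pi_k^2$, so the cross term reproduces $\dot T_k y_k$ exactly and the excess $\frac{\kappa}{2w^H}\|y\|^2$ is uniform over both branches and independent of $w^L$ and $\rho$. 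Two remarks for repairing your version: first, your chain rule $\dot\xi_k=-\frac{\alpha}{\rho}\dot\mu_k$ is the careful one, and if that $1/\rho$ is carried through the paper's square-completion the domination of $\dot U_k$ by $-\frac{1}{2\kappa}w_k\dot\mu_k^2$ needs $\frac{\alpha}{\rho}\geq\frac{1}{2\kappa}$ (the paper's displayed line $\dot U_k=-\alpha w_k\dot\mu_k^2$ drops this factor), so the true threshold on $\alpha$ is $\rho$-dependent either way; second, if you keep your own completion of the square, at least replace the crude bound $w_k(\xi_k)\geq w^L$ by the monotonicity bound $w_k(\xi_k)\geq w_k(\pi_k)$, which improves your requirement to $\alpha>\frac{\rho}{4\kappa}$ — but in either case the needed condition on $\alpha$ must be stated explicitly rather than attributed to $\alpha>\frac{1}{2\kappa}$.
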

\begin{proof}
Due to the convexity of $F^\star_k$, 
\begin{align*}
	F^\star_k (\zeta_k) - F^\star_k (\zeta_k^*) \geq (\zeta_k - \zeta_k^*) \nabla F^\star_k (\zeta_k^*)
\end{align*}
holds 
and hence the function $U_k$ becomes positive semi-definite. 
We first consider the time when mode switch does not happen in \req{nudge2_1} for $k$-th element. 
Then, the time derivative of $U_k$ along \req{nudge2} is given by
\begin{align}
\dot{U}_k = \rho \left(\nabla F^\star_k (\zeta_k) -  \nabla F^\star_k (\zeta_k^*) \right) \dot{\zeta}_k.
\end{align}
If $\pi_k - \pi_k^* \leq - \frac{\alpha}{\rho} \mu_k$, then $\zeta_k = \nabla F_k(\pi_k)$ holds and hence
\begin{align}
\dot{U}_k &= \rho (\pi_k - \pi_k^*)\nabla^2 F_k (\pi_k) \dot{\pi}_k
\nn\\
&= \rho (\pi_k - \pi_k^*) y_k.
\end{align}
Under $\pi_k - \pi_k^* \leq - \frac{\alpha}{\rho} \mu_k$, $\dot{T}_k = \rho \left( \pi_k - \pi_k^* \right) + \kappa \dot{\pi}_k$ holds. 
Thus, we have
\begin{align}
\dot{U}_k &= \dot{T}_k y_k - \kappa \dot{\pi}_k y_k
\nn\\
&=  \dot{T}_k y_k - \kappa w_k(\pi_k) \dot{\pi}_k^2 
\nn\\
&\leq  \dot{T}_k y_k - \frac{\kappa}{2} w_k(\pi_k) \dot{\pi}_k^2. 
\label{eq:ineq_mode1}
\end{align}
If $\pi_k - \pi_k^* > - \frac{\alpha}{\rho} \mu_k$, we obtain $\zeta_k = \nabla F_k(\pi_k^* - \frac{\alpha}{\rho} \mu_k)$ and $\dot{\mu}_k = - \alpha \mu_k$. 
Then, 
\begin{align}
\dot{U}_k &= \rho \left( \pi_k^* - \frac{\alpha}{\rho} \mu_k - \pi_k^* \right) \nabla^2 F_k \left( \pi_k^* -  \frac{\alpha}{\rho}  \mu_k \right) (- \alpha \dot{\mu}_k)
\nn\\
&= \alpha^2 w_k \left( \pi_k^* -  \frac{\alpha}{\rho}  \mu_k \right) \mu_k \dot{\mu}_k 
\nn\\
&= - \alpha w_k \left( \pi_k^* -  \frac{\alpha}{\rho}  \mu_k \right) \dot{\mu}_k^2
\end{align}
holds. 
Since $w_k$ is decreasing function, we obtain $w_k (\pi_k^* - \frac{\alpha}{\rho} \mu_k) \geq w_k (\pi_k)$. 
From this inequality and $\dot{T} = \dot{\mu} + \kappa \dot{\pi}$, we have
\begin{align}
\dot{U}_k &\leq - \alpha w_k (\pi_k)  \dot{\mu}_k^2 
\nn\\
&= - \frac{1}{2 \kappa} w_k (\pi_k)  \dot{\mu}_k^2 - \left( \alpha - \frac{1}{2 \kappa} \right)w_k (\pi_k)  \dot{\mu}_k^2
\nn\\
&\leq - \frac{1}{2 \kappa} w_k (\pi_k)  \left( \dot{T}_k - \kappa \dot{\pi}_k \right)^2 
\nn\\
&= \dot{T}_k w_k (\pi_k)  \dot{\pi}_k -\frac{1}{2 \kappa}  w_k (\pi_k) \dot{T}_k^2 - \frac{\kappa}{2} w_k (\pi_k)  \dot{\pi}_k^2 
\nn\\
& \leq \dot{T}_k y_k - \frac{\kappa}{2} w_k (\pi_k)  \dot{\pi}_k^2. 
\label{eq:ineq_mode2}
\end{align}

Next, we consider the time when a mode switch occurs in \req{nudge2_1} for $k$-th element. 
Since $U_k (\mu_k,\pi_k)$ is not differentiable at $(\mu_k,\pi_k)$ satisfying $\pi_k - \pi_k^* = -\alpha \mu_k$, we now take in the upper Dini derivative, denoted by $D^+ U_k$. 
From the results in \req{ineq_mode1} and \req{ineq_mode2}, we can confirm that
\begin{align*}
D^+ U_k \leq \dot{T}_k y_k - \frac{\kappa}{2} w_k (\pi_k)  \dot{\pi}_k^2 
\end{align*}
holds for all time $t \geq 0$. 
Therefore, we obtain
\begin{align}
D^+ U &\leq \dot{T}^\top y - \frac{\kappa}{2} \dot{\pi}^\top W(\pi)  \dot{\pi}  
\nn\\
&=  \dot{T}^\top y - \frac{\kappa}{2}y^\top \left( W(\pi) \right)^{-1} y 
\nn\\
& \leq \dot{T}^\top y - \frac{\kappa}{2 w^H} \| y \|^2. 
\label{eq:diniU}
\end{align}
Integrating \req{diniU} in time completes the proof. 
\end{proof}

Thanks to the result in \relem{dip_n2}, it is revealed that the proposed mechanism \req{nudge2} guarantees both $\delta$-input-strict-passivity and the existence of $T_{\max}$. 
In other words, the remained issue in the previous method~\cite{YHet2020} can be cleared by \req{nudge2}.
Thus, we can expected to achieve the passivity-based mechanism for Model~2.

Here we introduce the following lemma, which will be used in convergence analysis. 
\begin{lemma}
\label{lem:Tinf2}
Consider the system \req{nudge2}. 
Let a signal $\pi(t) \in \textrm{int}(\cS)$ achieves $\| T (t) \| \to \infty$ when $t \to \infty$. 
For this specific signal $\pi(t)$, $S(W(\pi) T, \pi) \to \infty$ holds for any $\pi^* \in \textrm{int}(\cS)$. 
\end{lemma}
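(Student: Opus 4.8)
The plan is to write $S(W(\pi)T,\pi)$ in closed form, isolate the single term that diverges, and dominate the remainder using the structure of \req{nudge2}.

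A direct computation (the minimizer being the logit map $Q$ of \req{logit_s}) shows $\min_{\omega\in\cS}(\omega^\top\tau+\frac1\beta\sum_{l\in\cA}\omega_l\log\omega_l)=-\frac1\beta\log\sum_{l\in\cA}e^{-\beta\tau_l}$, so substituting $\tau=W(\pi)T$ into \req{stfunc1} gives
\begin{align*}
S(W(\pi)T,\pi)=&\ \eta\,\pi^\top W(\pi)T+\frac\eta\beta\sum_{l\in\cA}\pi_l\log\pi_l
\\
&\ +\frac\eta\beta\log\sum_{l\in\cA}e^{-\beta w_l(\pi_l)T_l},
\end{align*}
where the entropy term lies in $[-\frac\eta\beta\log n,0]$ and is irrelevant to the limit. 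Next I would identify the divergent term. Since $\mu(t)\le0$ and $\pi,\pi^*$ are bounded, each $T_k(t)$ is bounded above, so $\|T(t)\|\to\infty$ forces $\min_k T_k(t)\to-\infty$ and hence, as $w_k\ge w^L>0$, also $\tau_{\min}(t):=\min_{k\in\cA}w_k(\pi_k(t))T_k(t)\to-\infty$. Keeping only the minimal term of the log-sum-exp and using $\bone_n^\top\pi=1$ yields
\begin{align*}
S(W(\pi)T,\pi)\ \ge\ &\eta\bigl(\pi^\top W(\pi)T-\tau_{\min}\bigr)-\tfrac\eta\beta\log n
\\
=\ &\eta\sum_{k\in\cA}\pi_k\bigl(w_k(\pi_k)T_k-\tau_{\min}\bigr)-\tfrac\eta\beta\log n ,
\end{align*}
a sum of nonnegative terms.

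It then suffices to exhibit a constant $\epsilon_1>0$ and an index $l_1$ with $\pi_{l_1}(t)\ge\epsilon_1$ and $\inf_t T_{l_1}(t)>-\infty$ for all large $t$: in that case $\sum_{k\in\cA}\pi_k(w_k(\pi_k)T_k-\tau_{\min})\ge\epsilon_1(w_{l_1}(\pi_{l_1})T_{l_1}-\tau_{\min})\to\infty$, and the bound above gives $S\to\infty$ (the estimate holding along every time sequence on which $\|T\|\to\infty$). This is where the form of \req{nudge2_1} enters. I would rely on three facts about it: (i) $\mu_k$ can decrease only while $\pi_k<\pi_k^*$; (ii) once $\mu_k<-\rho/\alpha$ the clamp $-\alpha\mu_k$ is inactive and $\dot\mu_k=\rho(\pi_k-\pi_k^*)$, i.e.\ \req{nudge2_1} reduces to the integrator \req{nudge1_1}; and (iii) $\bone_n^\top(\pi-\pi^*)=0$, so once every coordinate is in the regime of (ii) the quantity $\sum_{k\in\cA}\mu_k$ is frozen. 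Together these should force the set $D:=\{k\in\cA:\liminf_{t\to\infty}\mu_k(t)=-\infty\}$ to be a proper subset of $\cA$; for $k\notin D$ the cost $T_k$ stays bounded, and facts (i)--(iii) should also prevent $\pi$ from concentrating on $D$, so that $\sum_{k\notin D}\pi_k(t)$ is bounded below by a positive constant for large $t$ and any $l_1\notin D$ attaining this bound will do.

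The hard part will be exactly this last step: turning the qualitative facts (i)--(iii) into the uniform bounds $\pi_{l_1}(t)\ge\epsilon_1$ and $\inf_t T_{l_1}(t)>-\infty$. The difficulty is that $\pi(t)$ is only assumed to remain in $\textrm{int}(\cS)$, so it may approach the boundary and oscillate---one cannot merely take the coordinate of largest $\pi$, nor assume a priori that the ``good'' coordinate is time-invariant---and it is precisely the clamp $-\alpha\mu$, absent from \req{nudge1}, that keeps all cost coordinates from diverging together; making this bookkeeping rigorous is the technical heart of the lemma.
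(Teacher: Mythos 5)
Your setup is sound and matches the computation the paper itself performs in \reapp{pr_Srad}: the closed form of $S$, the bound $T_k\le\bar T+\kappa$ from $\mu\le 0$, the conclusion $\tau_{\min}\to-\infty$, and the lower bound $S\ge\eta\sum_k\pi_k(w_k(\pi_k)T_k-\tau_{\min})-\frac{\eta}{\beta}\log n$ are all correct. But the argument stops exactly at its crux: the existence of $\epsilon_1$ and $l_1$ is only conjectured (``should force'', ``should also prevent''), so this is not yet a proof. On the cost side, the ingredient you are missing is precisely the paper's auxiliary \relem{bound_n2} in \reapp{pr_Tinf2}: if every $\mu_k$ were far below $-\rho/\alpha$, the minimum in \req{nudge2_1} would select $\rho(\pi-\pi^*)$ for every $k$, whence $\bone_n^\top\dot\mu=0$ and $\sum_k\mu_k$ is conserved, contradicting simultaneous divergence; this yields boundedness of $T^H=\max_k T_k$ and hence $\tau^H-\tau^L\to\infty$. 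Your facts (ii)--(iii) are the right raw material, but you additionally demand a \emph{fixed} index $l_1$ with $\inf_t T_{l_1}>-\infty$, which is both stronger than what the conservation argument delivers (the bounded-cost coordinate may change with time, and with your liminf-defined set $D$ one needs all coordinates to be simultaneously very negative, which coordinatewise liminfs do not provide) and stronger than what the paper needs.

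The more serious defect is the $\pi$ part of your program. In this lemma $\pi(t)$ is an arbitrary exogenous signal in $\textrm{int}(\cS)$, not a closed-loop trajectory: \req{nudge2} determines $\mu$ and $T$ from $\pi$, but it imposes no constraint whatsoever on how much mass $\pi$ places on the diverging coordinates at a given instant. Hence the uniform bound $\pi_{l_1}(t)\ge\epsilon_1$ cannot follow from facts (i)--(iii); nothing forbids $\pi$ from briefly concentrating on the coordinates where $T$ is very negative (such excursions do not destroy $\|T\|\to\infty$, since $\mu$ recovers at rate at most of order $\rho$), so the product $\pi_{l_1}\cdot(\tau_{l_1}-\tau_{\min})$ cannot be controlled through the mechanism alone. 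The paper does not attempt this: it combines the spread estimate $\tau^H-\tau^L\to\infty$ with the radial-unboundedness result \relem{Srad} of \reapp{pr_Srad}, where the factor $\veps$ with $\pi_k>\veps$ is supplied by the interiority of $\pi$ itself, not by \req{nudge2}. So the step you defer as ``the technical heart'' is not merely unfinished; the route you sketch for it points in a direction that cannot succeed as stated, and you would need to replace it by the paper's two ingredients: the conservation argument for $\max_k\mu_k$ and the interiority-based weighting in \relem{Srad}.
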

\begin{proof}
See \reapp{pr_Tinf2}. 
\end{proof}


From \relem{dps_bias2}, \relem{dip_n2} and \relem{Tinf2}, we are now ready to analyze convergence of the proposed mechanism \req{nudge2} for Model~2. 
We should remark that the LaSalle's invariance principle~\cite{K2002} cannot be applied due to non-smoothness of the storage function $U(\mu,\pi)$. 
Alternatively, we discuss the convergence by using the invariance principle for non-smooth Lyapunov functions~\cite{SP1994}. 
Then, the following theorem can be proven. 
\begin{theorem}
\label{th:conv2}
Consider the feedback system composed of \req{logit}, \req{bias2} and \req{nudge2} under \reas{func_w}. 
When $\kappa > \frac{2 v^H T_{\max}}{w^L}$ holds, the population state $\pi$ achieves $\lim_{t \to \infty} \pi(t) = \pi^*$. 
\end{theorem}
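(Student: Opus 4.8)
The plan is to build a single Lyapunov-type function from the storage functions of the two interconnected subsystems and to apply the invariance principle for nonsmooth Lyapunov functions~\cite{SP1994}. Set $V_2 := S(W(\pi)T,\pi) + U(\mu,\pi)$, where $S$ is the storage function \req{stfunc1} evaluated along $\tau = W(\pi)T$ and $U$ is the storage function of \relem{dip_n2}; both are positive semi-definite, hence $V_2 \geq 0$. The crucial observation is that one should \emph{not} invoke \relem{dps_bias2} and \relem{dip_n2} in their reduced $\|y\|^2$ form, but rather retain the sharper $W(\pi)$-weighted estimates obtained inside their proofs, namely $\dot{S} \leq -\dot{T}^\top W(\pi)\dot{\pi} + \dot{\pi}^\top \Phi(T,\pi)\dot{\pi}$ (cf.\ \req{dps_bias2}) and $D^+ U \leq \dot{T}^\top y - \frac{\kappa}{2}\dot{\pi}^\top W(\pi)\dot{\pi}$ (cf.\ \req{diniU}). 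Adding these, using $y = W(\pi)\dot{\pi}$ so that the cross terms $\pm\dot{T}^\top W(\pi)\dot{\pi}$ cancel, and noting that the $k$-th diagonal entry of $\Phi(T,\pi) - \frac{\kappa}{2}W(\pi)$ is $(-\nabla w_k(\pi_k))T_k - \frac{\kappa}{2}w_k(\pi_k) \leq v^H T_{\max} - \frac{\kappa}{2}w^L$ by \reas{func_w} and $T_k \leq T_{\max}$, one obtains
\begin{align*}
D^+ V_2 \leq \dot{\pi}^\top\!\left(\Phi(T,\pi) - \frac{\kappa}{2}W(\pi)\right)\!\dot{\pi} \leq \left(v^H T_{\max} - \frac{\kappa}{2}w^L\right)\|\dot{\pi}\|^2.
\end{align*}
Under $\kappa > \frac{2 v^H T_{\max}}{w^L}$ the leading coefficient is strictly negative, so $D^+ V_2 \leq 0$ and $D^+V_2 = 0$ forces $\dot{\pi} = 0$.

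Next I would establish signal boundedness. Since $V_2$ is non-increasing and $V_2(0) < \infty$ (finiteness following from $\pi(0)\in\textrm{int}(\cS)$ and $\mu(0) \leq 0$), and since $S,U \geq 0$, it follows that $S(W(\pi)T,\pi) \in \cL_\infty$. The logit dynamics keeps $\pi(t)\in\textrm{int}(\cS)$ for all $t$, so \relem{Tinf2} yields $T \in \cL_\infty$, and then \req{nudge2_2} gives $\mu \in \cL_\infty$; thus every trajectory is bounded. Because $U$ (and hence $V_2$) is nonsmooth, LaSalle's principle is not directly applicable, so I would instead invoke the invariance principle for nonsmooth Lyapunov functions~\cite{SP1994}: every solution of \req{logit}, \req{bias2} and \req{nudge2} converges to the largest invariant set contained in $\{\dot{\pi} = 0\}$.

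It then remains to show that this invariant set equals $\{\pi = \pi^*\}$. Along any trajectory lying in it, $\pi$ is constant and $\dot{T} = \dot{\mu} + \kappa\dot{\pi} = \dot{\mu}$. Suppose, for contradiction, $\pi \neq \pi^*$. Since $\bone_n^\top\pi = \bone_n^\top\pi^* = 1$, there is an index $k$ with $\pi_k < \pi_k^*$, so $\rho(\pi_k - \pi_k^*) < 0 \leq -\alpha\mu_k$, where the last inequality uses $\mu_k \leq 0$ guaranteed by \req{nudge2_1}. Hence the minimum in \req{nudge2_1} selects $\dot{\mu}_k = \rho(\pi_k - \pi_k^*)$, a strictly negative constant, so $\mu_k(t) \to -\infty$ and therefore $T_k(t) \to -\infty$, i.e.\ $\|T(t)\| \to \infty$, contradicting $T\in\cL_\infty$. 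Thus $\pi = \pi^*$ on the invariant set, and the invariance principle gives $\lim_{t\to\infty}\pi(t) = \pi^*$.

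The main obstacle is twofold. First, to reach the sharp gain bound $\kappa > \frac{2 v^H T_{\max}}{w^L}$ one must manipulate the $W(\pi)$-weighted quadratic forms living inside the proofs of Lemmas~\ref{lem:dps_bias2} and~\ref{lem:dip_n2} instead of their packaged statements; reducing prematurely to $\|y\|^2$ would yield only the weaker condition $\kappa \geq \frac{2 w^H v^H T_{\max}}{(w^L)^2}$. Second, the nonsmoothness of $U$ induced by the mode switch in \req{nudge2_1} forces careful use of the Dini derivative and of the nonsmooth invariance principle; as the third paragraph indicates, however, that same mode switch is exactly what makes the divergence argument on the invariant set go through, since on the indices with $\pi_k < \pi_k^*$ the saturated integrator reduces to the plain integrator of \req{nudge1}.
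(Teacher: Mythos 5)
Your proposal is correct and follows essentially the same route as the paper: the same composite function $V_2 = S(W(\pi)T,\pi) + U(\mu,\pi)$, the same $W(\pi)$-weighted intermediate bounds yielding $D^+V_2 \leq \dot{\pi}^\top(\Phi(T,\pi)-\tfrac{\kappa}{2}W(\pi))\dot{\pi}$, boundedness of $T$ and $\mu$ via \relem{Tinf2}, and the nonsmooth invariance principle of \cite{SP1994}. Your invariant-set argument (picking an index with $\pi_k < \pi_k^*$ and showing the saturated integrator then runs away) is just a compressed version of the paper's two-step contradiction, so the proofs coincide in substance.
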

\begin{proof}
Define the function $V_2 := S(W(\pi)T, \pi) + U(\mu, \pi)$. 
From  \relem{dps_bias2} and \relem{dip_n2}, the upper Dini derivative of $V_2$ satisfies
\begin{align}
D^+ V_2 &\leq - \dot{T}^\top y + \dot{\pi}^\top \Phi (T, \pi) \dot{\pi}
 + \dot{T}^\top y - \frac{\kappa}{2} \dot{\pi}^\top W(\pi)  \dot{\pi}
\nn\\
&= \dot{\pi}^\top \left( \Phi (T, \pi)  - \frac{\kappa}{2} W(\pi)  \right) \dot{\pi}. 
\label{eq:energy2}
\end{align}
Under $\kappa > \frac{2 v^H T_{\max}}{w^L}$, $\Phi (T, \pi)  - \frac{\kappa}{2} W(\pi) \prec 0$ is satisfied and hence $D^+ V_2 \leq 0$ holds. 
Denote the initial values of $S$ and $U$ by $S_0$ and $U_0$, respectively. 
Then, $V_2 \leq S_0 + U_0 < \infty$ holds for all time. 
This implies $T \in \cL_\infty$ and $\mu \in \cL_\infty$ from \relem{Tinf2} and $\pi \in \textrm{int}(\cS)$. 
Accordingly, the invariance principle for non-smooth function~\cite{SP1994} is applicable, 
and trajectories generated by \req{logit}, \req{bias2} and \req{nudge2} for any initial conditions $T(0) \in \bR^n$ and $\pi(0) \in \textrm{int}(\cS)$ converge to the largest invariant set satisfying $D^+ {V}_2 = 0$. 

Let us now suppose the situation under $D^+ V_2 \equiv 0$. 
From \req{energy2}, $D^+ V_2 \equiv 0$ yields $\dot{\pi} \equiv 0$ and hence $\pi$ must be constant. 
If $\pi_k < \pi^*_k$, then $k$-th element of \req{nudge2_1} follows $\dot{\mu}_k \equiv \rho(\pi_k - \pi_k^*) < 0$, which contradicts the boundedness of $\mu$. 
Thus, $\pi \geq \pi^*$ is identically satisfied. 
If $\pi_k > \pi^*_k$ holds for some $k \in \cA$, $\sum_{l \in \cA} (\pi_l - \pi_l^*) \geq \pi_k - \pi_k^* > 0$ must hold. 
However, this contradicts the fact $\bone_n^\top (\pi - \pi^*) \equiv 0$ given by $\pi \in \textrm{int}(\cS)$ and $\pi^* \in \textrm{int}(\cS)$. 
Therefore, $\pi \equiv \pi^*$ is satisfied when $D^+ V_2 \equiv 0$ holds. 

In summary, the invariance principle for non-smooth function~\cite{SP1994} proves that $\pi$ asymptotically converges to $\pi^*$. 
\end{proof}

Thanks to the upper bound condition $T_k(t) \leq T_{\max} \ \forall k\in \cA, \forall t \geq 0$ ensured by \req{nudge2}, the stability and convergence of the feedback system in \refig{nudge2} can be exactly guaranteed under the gain condition  $\kappa > \frac{2 v^H T_{\max}}{w^L}$. 
Similar to \reth{conv1}, the result in \reth{conv2} suggests the tendency that large gain $\kappa$ will be required for strongly biased decision makers. 
The quantitative inequality is calculated as a result of passivity-based analysis. 

\begin{remark}
	\label{rem:CL2018_1}
	The authors in \cite{CL2016} and \cite{CL2018} proposed a similar nudging mechanism to \req{nudge1}, with assuming that the signal $Q(\tau)$ is observable. 
	This assumption is different from the one dealt with in this paper. 
	Whereas, the information $Q(\tau)$ is not easy to get from the decision makers since it is implicit variable. 
	Although $Q(\tau)$ might be estimated by using $\pi$ and $\dot{\pi}$, there is another difficulty of identifying $\eta$. 
	Thus, we suppose the observation of population state $\pi$, which is commonly used in the field of evolutionary game \cite{PSM2018}. 
\end{remark}

\begin{remark}
	In \cite{CL2018}, the authors addressed the biased population dynamics, and analyzed the convergence of nudging mechanism based on singular perturbation. 
	The convergence condition in \cite{CL2018} implicitly relies on the update rate $\eta$ of the decision makers \req{logit}. 
	Meanwhile, the result in \reth{conv2} shows two advantages against \cite{CL2018}. 
	The first one is to explicitly clarify a quantitative condition for stability, which is a benefit of passivity-based analysis. 
	In addition, the proposed mechanism in this paper can design the parameter $\kappa$ independently of the update rate $\eta$, which is the second contribution against the nudging method in \cite{CL2018}. 
\end{remark}


\section{Conclusion}

In this paper, we addressed design of mechanisms for decision makers with conformity biases. 
We first introduced two types of bias models addressed in~\cite{BD2001} and~\cite{CL2018}. 
Next, we analyzed the population dynamics with the bias models in terms of $\delta$-passivity. 
Then, we clarified that conformity biases appear as positive feedback terms, and they break passivity of dynamic decision making. 
We furthermore presented passivity-based mechanisms for biased population dynamics, and showed convergence conditions for the proposed mechanisms. 
Accordingly, we confirmed that high gain feedback should have been required for the decision makers with strong biases. 


\appendices

\section{Radial Unboundedness of Storage Function}
\label{app:pr_Srad}

The storage function $S(\tau, \pi)$, defined in \req{stfunc1}, satisfies the following lemma. 
\begin{lemma}
\label{lem:Srad}
Define 
$
\tau^H  := \max\{ \tau_1, \tau_2, \dots, \tau_n \}$ and $
\tau^L  := \min\{ \tau_1, \tau_2, \dots, \tau_n \}$. 
Under $\pi \in \rm{int}(\cS)$, then 
\begin{align*}
\tau^H - \tau^L \to \infty \ \Rightarrow \ S(\tau, \pi) \to \infty
\end{align*}
holds. 
\end{lemma}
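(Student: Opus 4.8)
The plan is to rewrite the storage function in the form
\[
S(\tau,\pi)=\eta\Bigl(g_\tau(\pi)-\min_{\omega\in\cS} g_\tau(\omega)\Bigr),
\qquad
g_\tau(\omega):=\omega^\top\tau+\tfrac1\beta\sum_{l\in\cA}\omega_l\log\omega_l ,
\]
with the convention $0\log 0=0$, and then to bound the bracket below by a quantity that is linear in $\tau^H-\tau^L$. The first ingredient is an upper bound on the minimum: since every standard basis vector $e_i$ lies in $\cS$ and $g_\tau(e_i)=\tau_i$ (the entropy term vanishes at a vertex), taking $i$ with $\tau_i=\tau^L$ yields $\min_{\omega\in\cS} g_\tau(\omega)\le\tau^L$. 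This sidesteps any need to work with the explicit minimizer $Q(\tau)$.

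Next I would lower-bound $g_\tau(\pi)$. Because $\pi\in\cS$, the negative Shannon entropy on the simplex obeys $\tfrac1\beta\sum_{l}\pi_l\log\pi_l\ge-\tfrac{\log n}{\beta}$, so $g_\tau(\pi)\ge\pi^\top\tau-\tfrac{\log n}{\beta}$. Combining the two estimates,
\[
S(\tau,\pi)\ \ge\ \eta\Bigl(\pi^\top\tau-\tau^L-\tfrac{\log n}{\beta}\Bigr)
\ =\ \eta\Bigl(\sum_{k\in\cA}\pi_k(\tau_k-\tau^L)-\tfrac{\log n}{\beta}\Bigr).
\]
Every summand $\pi_k(\tau_k-\tau^L)$ is nonnegative, so retaining only the index $j$ with $\tau_j=\tau^H$ gives $\sum_k\pi_k(\tau_k-\tau^L)\ge\pi_j(\tau^H-\tau^L)\ge(\min_k\pi_k)(\tau^H-\tau^L)$.

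Since $\pi\in\mathrm{int}(\cS)$ is fixed, $\underline\pi:=\min_k\pi_k>0$, hence $S(\tau,\pi)\ge\eta\bigl(\underline\pi(\tau^H-\tau^L)-\tfrac{\log n}{\beta}\bigr)\to\infty$ as $\tau^H-\tau^L\to\infty$, which is the claim. There is no serious obstacle here; the only points needing a line of care are the vanishing of the entropy at a vertex (so that the minimum is controlled without computing $Q(\tau)$), the uniform bound $-\log n$ on the entropy contribution of $\pi$, and the essential use of strict interiority $\pi_k>0$ for all $k$ to keep $\underline\pi$ positive. (Alternatively one could note the identity $S=\tfrac{\eta}{\beta}\,\mathrm{KL}(\pi\,\|\,Q(\tau))$ and invoke the data-processing inequality to reduce to a two-point divergence, but the direct estimate above is shorter.)
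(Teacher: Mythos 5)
Your proof is correct, and it reaches the same key estimate as the paper --- namely $S(\tau,\pi)\ \geq\ \eta\bigl(\veps(\tau^H-\tau^L)-\mathrm{const}\bigr)$ with $\veps$ a positive lower bound on the components of $\pi$ --- but it handles the inner minimization differently. The paper solves $\min_{\omega\in\cS}\bigl(\omega^\top\tau+\tfrac1\beta\sum_{l\in\cA}\omega_l\log\omega_l\bigr)$ exactly via the KKT conditions, obtains the closed form $-\tfrac1\beta\log\bigl(\sum_{l\in\cA}\exp(-\beta\tau_l)\bigr)$, and then invokes the Log--Sum--Exp lower bound $\tfrac1\beta\log\sum_l\exp(-\beta\tau_l)\geq-\tau^L$; you instead upper-bound the minimum directly by evaluating the objective at the vertex $e_i$ attaining $\tau^L$ (with the $0\log 0=0$ convention, or equivalently a limiting/continuity argument if one insists the minimizer lies in the interior), and you replace the paper's remark that $\sum_l\pi_l\log\pi_l$ is finite by the explicit uniform bound $-\log n$. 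Your route is shorter and more elementary, since it sidesteps the KKT computation entirely; the paper's route additionally yields the explicit expression of $S$ (equivalently, your observed identity $S=\tfrac{\eta}{\beta}\mathrm{KL}(\pi\,\|\,Q(\tau))$), which makes the structure of the storage function transparent but is not strictly needed for the implication. One small caution for how the lemma is later used: in Lemmas \ref{lem:Tinf1} and \ref{lem:Tinf2} the argument is applied along trajectories $\pi(t)\in\mathrm{int}(\cS)$, so the componentwise lower bound $\underline{\pi}>0$ should be read as the same uniform-in-time $\veps$ that the paper's proof posits; your ``fixed $\pi$'' phrasing is no weaker than the paper's treatment, but neither proof establishes such uniformity from the dynamics, so you are on equal footing there.
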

\begin{proof}
Let us first focus on the second term in the right hand side of \req{stfunc1}, which is a convex optimization about $\omega$. 
Remarking that $\sum_{l \in \cA}\omega_l \log{\omega_l}$ works as a barrier function to the inequality constraints, $\omega^* \in \bR^n$ is an optimal solution if and only if there exists $\lambda^* \in \bR$ satisfying the following conditions: 
\begin{subequations}
\begin{align}
& \tau_k + \frac{1}{\beta} \left( 1 +  \log \omega_k^* \right) + \lambda^* = 0 \ \ \forall k \in \cA, 
\label{eq:kkt1}
\\
&  {\bf 1}_n^\top \omega^* = 1.
\label{eq:kkt2}
\end{align}\end{subequations}
From \req{kkt1}, we obtain $ \log \omega_k^* = - \beta \tau_k -  \beta \lambda^*  - 1$ and hence 
\begin{align}
\omega_k^* = \textrm{exp} \left( - \beta \tau_k -  \beta \lambda^*  - 1 \right). 
\label{eq:kkt1_2}
\end{align}
Applying \req{kkt2} to \req{kkt1_2}, we have
\begin{align}
& \textrm{exp} \left( -  \beta \lambda^*  - 1 \right) \sum_{l \in \cA} \textrm{exp} \left( - \beta \tau_l\right) = 1, 
\nn\\
\therefore \ & \lambda^* = \frac{1}{\beta} \left( - 1 + \log \left( \sum_{l \in \cA} \textrm{exp} \left( - \beta \tau_l\right) \right)  \right). 
\label{eq:kkt1_3}
\end{align}
Hence, we can calculate the second term in the right hand side of \req{stfunc1} as 
\begin{align*}
& \min_{\omega \in \cS} \left( \omega^\top \tau+\frac{1}{\beta}\sum_{l \in \cA}\omega_l \log{\omega_l} \right)
\\
& \ \ \ \ = (\omega^*)^\top \tau+\frac{1}{\beta}\sum_{l \in \cA}\omega^*_l \log{\omega^*_l}
\\
& \ \ \ \ = \sum_{l \in \cA}  \omega^*_l \left(\tau_l + \frac{1}{\beta} \left( - \beta \tau_l -  \beta \lambda^*  - 1 \right) \right)
\\
& \ \ \ \ = \sum_{l \in \cA}  \omega^*_l \left( - \lambda^*  - \frac{1}{\beta} \right) 
 = - \frac{1}{\beta}  \log \left( \sum_{l \in \cA} \textrm{exp} \left( - \beta \tau_l\right) \right). 
\end{align*}
Thus, the storage function $S(\tau, \pi)$ is given as 
\begin{align*}
S(\tau,\pi) = \ & \eta \pi^\top \tau+\frac{\eta }{\beta}\sum_{l \in \cA}\pi_l \log{\pi_l} 
+ \frac{\eta}{\beta} \log \left( \sum_{l \in \cA} \exp(-\beta \tau_l) \right). 
\end{align*}

Let us next focus on the terms depending on $\tau$. 
Due to the property of the Log-Sum-Exp function~\cite{BV2004}, 
\begin{align*}
\frac{1}{\beta} \log \left( \sum_{l \in \cA} \exp(-\beta \tau_l) \right) \geq - \tau^L
\end{align*}
is satisfied. 
Noticing that $\bone_n^\top \pi = 1$ holds and there exists $\veps > 0$ such that $\pi_k > \veps \ \forall k \in \cA$, we obtain 
\begin{align}
\pi^\top \tau + \frac{1}{\beta} \log \left( \sum_{l \in \cA} \exp(-\beta \tau_l) \right) 
&\geq \pi^\top \tau - \tau^L 
\nn\\
& = \sum_{l \in \cA} \pi_l (\tau_l - \tau^L)
\nn\\
&\geq \veps (\tau^H - \tau^L). 
\label{eq:app_ineq}
\end{align}

We now consider the case that $(\tau^H - \tau^L) \to \infty$ happens. 
Then, $\eta \pi^\top \tau + \frac{\eta}{\beta} \log \left( \sum_{l \in \cA} \exp(-\beta \tau_l) \right) $ goes to infinity due to \req{app_ineq}. 
Therefore, since $\sum_{l \in \cA}\pi_l \log{\pi_l}$ is finite, $S(\tau, \pi) \to \infty$ holds. 
This completes the proof. 
\end{proof}

\section{Proof of Lemmas}

\subsection{Proof of \relem{Tinf1}}
\label{app:pr_Tinf1}

Due to $\bone_n^\top \pi =1$ and $\bone_n^\top \pi^* = 1$, the signal $\dot{\mu}$ of \req{nudge1_1} satisfies $\bone_n^\top \dot{\mu} \equiv 0$. 
Hence, the system \req{nudge1} satisfies $\bone_n^\top T(t) \equiv \bone_n^\top \mu(t) \equiv \bone_n^\top \mu(0)$. 
Define 
\begin{align*}
T^H (t) &:= \max\{ T_1 (t), T_2(t), \dots, T_n(t) \}, 
\\
T^L (t) &:= \min\{ T_1 (t), T_2(t), \dots, T_n(t) \}. 
\end{align*}
When $\| T(t) \| \to \infty$ happens, $T^H(t) \to \infty$ and $T^L(t) \to - \infty$ hold due to the constraint $\bone_n^\top T(t) \equiv \bone_n^\top \mu(0)$. 
Then, $(T^H(t) - T^L(t) +b^L- b^H) \to \infty$ is also satisfied. 
Let us consider $\tau^H$ and $\tau^L$ defined in \relem{Srad} under \req{bias1}. 
Due to $\tau^H \geq T^H + b^L$ and $\tau^L \leq T^L + b^H$, we have 
$\tau^H - \tau^L \geq T^H - T^L + b^L - b^H$. 
Thus, $(\tau^H - \tau^L) \to \infty$ holds when $\| T(t) \| \to \infty$ happens. 
From the above discussion and \relem{Srad}, when the system \req{nudge1} achieves $\| T(t) \| \to \infty$ under $\pi(t) \in \textrm{int}(\cS)$, $S(T + B(\pi), \pi) \to \infty$ holds. 
This completes the proof of \relem{Tinf1}. 

\subsection{Proof of \relem{Tinf2}}
\label{app:pr_Tinf2}

Before we prove \relem{Tinf2}, we show the following lemma. 
\begin{lemma}
\label{lem:bound_n2}
Consider the signal $T(t)$ generated by \req{nudge2}. 
For any constant $\pi^* \in \textrm{int}(\cS)$ and for any signal $\pi(t) \in \textrm{int}(\cS)$, 
the signal $T^H (t) := \max\{ T_1 (t), T_2(t), \dots, T_n(t) \}$ is bounded. 
\end{lemma}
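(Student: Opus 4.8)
The plan is to bound $T^H(t)$ by controlling each component of the mechanism \req{nudge2_2} separately: $T_k(t) = \bar{T} + \mu_k(t) + \kappa(\pi_k(t) - \pi_k^*)$. The term $\bar{T}$ is a fixed constant, and the proportional term $\kappa(\pi_k - \pi_k^*)$ is automatically bounded since $\pi(t) \in \textrm{int}(\cS)$ forces $\pi_k(t) \in (0,1)$, so $|\kappa(\pi_k - \pi_k^*)| < \kappa$. Hence the whole question reduces to showing that $\mu_k(t)$ is bounded above for each $k$; indeed, as already remarked after \req{nudge2}, the $\min$ with $-\alpha\mu$ in \req{nudge2_1} together with $\mu(0)\le 0$ guarantees $\mu_k(t) \le 0$ for all $t$, so $T_k(t) \le \bar{T} + \kappa$, giving $T^H(t) < \bar{T} + \kappa$ uniformly. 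So the substantive content is just to verify rigorously the claim $\mu_k(t)\le 0$ for all $t\ge 0$.

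First I would argue componentwise that $\mu_k$ stays nonpositive. Fix $k$ and suppose, for contradiction, that $\mu_k(t_1) > 0$ for some $t_1 > 0$. Since $\mu_k(0)\le 0$ and $\mu_k$ is continuous (it is the integral of a bounded, in fact locally bounded, right-hand side), there is a last time $t_0 < t_1$ with $\mu_k(t_0)=0$ and $\mu_k(t) > 0$ on $(t_0, t_1]$. On that interval, $-\alpha\mu_k(t) < 0$, so by \req{nudge2_1} we have $\dot{\mu}_k(t) \le -\alpha\mu_k(t) < 0$ wherever $\mu_k$ is differentiable, which forces $\mu_k$ to be nonincreasing on $(t_0,t_1]$ — contradicting $\mu_k(t) > \mu_k(t_0) = 0$. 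Hence $\mu_k(t)\le 0$ for all $t\ge 0$, for every $k\in\cA$.

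Then I would conclude: for every $k$ and every $t\ge 0$,
\begin{align*}
T_k(t) = \bar{T} + \mu_k(t) + \kappa\bigl(\pi_k(t) - \pi_k^*\bigr) \le \bar{T} + 0 + \kappa\cdot 1 = \bar{T} + \kappa,
\end{align*}
using $\mu_k(t)\le 0$ and $\pi_k(t) - \pi_k^* < 1$ (as $\pi_k(t)\in(0,1)$, $\pi_k^*>0$). Taking the maximum over $k$ gives $T^H(t) \le \bar{T} + \kappa$ for all $t\ge 0$, which is the desired bound. The main obstacle, such as it is, is the slightly delicate handling of the nonsmooth right-hand side of \req{nudge2_1} — the $\min$ makes $\dot\mu_k$ only piecewise continuous, so one should phrase the monotonicity argument in terms of the upper Dini derivative $D^+\mu_k(t)\le -\alpha\mu_k(t)$ on any interval where $\mu_k>0$, or invoke a comparison lemma, rather than manipulating $\dot\mu_k$ pointwise; otherwise the argument is entirely routine.
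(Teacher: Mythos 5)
There is a genuine gap: you establish only the \emph{upper} bound $T^H(t) < \bar{T} + \kappa$, which is the easy half of the claim --- the paper already states it immediately after introducing \req{nudge2} (existence of $T_{\max} < \bar{T}+\kappa$). The lemma asserts that $T^H$ is \emph{bounded}, i.e.\ also bounded below, and it is precisely the lower bound that is needed downstream: in the proof of \relem{Tinf2}, $\|T(t)\|\to\infty$ forces $T^L(t)\to-\infty$ (because of the upper bound), and one may conclude $T^H(t)-T^L(t)\to\infty$ only because $T^H(t)$ does not also drift to $-\infty$. Your argument contains nothing that prevents this: you only show $\mu_k(t)\le 0$, and a priori all the $\mu_k$ could tend to $-\infty$ together, dragging every $T_k$, and hence $T^H$, to $-\infty$.

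The missing idea is to rule out simultaneous divergence of all the $\mu_k$. The paper does this by contradiction: if $\mu_k\to-\infty$ for every $k\in\cA$, then after some time $\hat{t}$ one has $-\frac{\alpha}{\rho}\mu_k > \pi_k-\pi_k^*$ for all $k$, so the $\min$ in \req{nudge2_1} selects $\dot{\mu}=\rho(\pi-\pi^*)$; since $\bone_n^\top\pi(t)=\bone_n^\top\pi^*=1$, this gives $\bone_n^\top\dot{\mu}\equiv 0$, so $\sum_{l\in\cA}\mu_l(t)$ is constant for $t\ge\hat{t}$, contradicting $\mu_k\to-\infty$ for all $k$. Hence $\max_k\mu_k(t)$ is bounded below, and then $T^H(t) > \bar{T} + \max_k\mu_k(t) - \kappa$ is bounded below as well, completing the two-sided bound. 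Your careful verification that $\mu_k(t)\le 0$ via a Dini-derivative comparison on intervals where $\mu_k>0$ is correct and somewhat more rigorous than the paper's one-line remark, but it only covers the trivial direction of the lemma.
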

\begin{proof}
Let us suppose the case that the system \req{nudge2_1} satisfies $\| \mu_k \| \to \infty$ for all $k \in \cA$. 
Since \req{nudge2_1} guarantees $\mu_k \leq 0$, $\mu_k \to - \infty$ must hold for all $k \in \cA$. 
Then, there exists a time $\hat{t} > 0$ such that 
$- \frac{\alpha}{\rho} \mu_k > \pi_k - \pi_k^* \ \forall k \in \cA$
holds for any time $t \geq \hat{t}$. 
In other words, \req{nudge2_1} follows $\dot{\mu} = \rho(\pi - \pi^*)$ for any $t \geq \hat{t}$. 
This yields $\bone_n^\top \dot{\mu} = 0$ and hence $\sum_{l \in \cA} \mu_l (t)$ must be constant for any time $t \geq \hat{t}$. 
This contradicts the assumption satisfying $\mu_k \to - \infty \ \forall k \in \cA$. 

Therefore, the system \req{nudge2_1} guarantees at least $- \infty < \max \{ \mu_1(t), \mu_2(t), \dots, \mu_n (t) \}$. 
As a result, $T^H(t)$ is bounded since $\max \{ \mu_1(t), \mu_2(t), \dots, \mu_n (t) \} + \bar{T} -\kappa < T^H(t) < \bar{T} + \kappa$ holds under \req{nudge2_2}. 
\end{proof}

We now prove \relem{Tinf2}. 
Let us introduce the signals $T^H(t)$ and $T^L(t)$, defined in \reapp{pr_Tinf1}. 
If \req{nudge2} achieves $\|T(t)\| \to \infty$, then $T^L(t) \to - \infty$ must hold. 
This yields $(T^H(t) - T^L(t) ) \to \infty$ from \relem{bound_n2}. 
Then, $(w^H T^H(t) - w^L T^L(t) )\to \infty$ is also satisfied. 
Let us consider $\tau^H$ and $\tau^L$ defined in \relem{Srad} under \req{bias2}. 
Due to $\tau^H - \tau^L \geq w^H T^H - w^L T^L$, $(\tau^H - \tau^L) \to \infty$ holds when $\|T(t)\| \to \infty$ happens. 
Accordingly, \relem{Srad} proves \relem{Tinf2}.

\end{document}